\newenvironment{proofsketch}{\par{\noindent \emph{Proof Sketch.}}}{\qed\par}
\newcommand{\squishlist}{
   \begin{list}{$\bullet$}
    { \setlength{\itemsep}{0pt}      \setlength{\parsep}{1pt}
      \setlength{\topsep}{3pt}       \setlength{\partopsep}{0pt}
      \setlength{\leftmargin}{1.5em} \setlength{\labelwidth}{1em}
      \setlength{\labelsep}{0.5em} } }
\newcommand{\squishlisttwo}{
   \begin{list}{$\bullet$}
    { \setlength{\itemsep}{0pt}    \setlength{\parsep}{0pt}
      \setlength{\topsep}{0pt}     \setlength{\partopsep}{0pt}
      \setlength{\leftmargin}{2em} \setlength{\labelwidth}{1.5em}
      \setlength{\labelsep}{0.5em} } }
\newcommand{\squishend}{
    \end{list}  }
\newcommand{\Q}[2]{Q_{#1}^{({#2})}}
\newcommand{\Rnn}{\mathbb{R}_{\geq 0}}
\newcommand{\x}{\mathbf{x}}
\begin{document}

\mainmatter  

\title{Sequential Resource Allocation with Positional Costs}


\urldef{\mailsa}\path|bojhuang@microsoft.com|
\author{Bojun Huang}
\authorrunning{Bojun Huang}

\institute{Microsoft Research, Beijing, China \\
\mailsa
}
%
%

\onecolumn \maketitle

\begin{abstract}
We consider the problem of minimizing the total cost to run a sequence of $n$ tasks in the given order by $k$ agents under the \emph{positional cost model}. The cost to run a task not only depends on the intrinsic cost of the task itself, but also monotonically related to the position this task is in the working list of the agent assigned. Such a positional effect can naturally arise from the classic sum-of-completion-time minimization problems, and is also well motivated by the varying \emph{efficiency} when an agent works in reality (such as due to the learning effects or deteriorating effects). Also, it can be seen as a deterministic variant of the classic Baysian sequential decision making problems. This paper presents a simple and practical algorithm that runs in $O(k^2 n)$ time and minimizes the total cost of any problem instance consisting of two task types. The algorithm works by making greedy decision for each task \emph{sequentially} based on some stopping thresholds in a ``greedy-like" allocation simulation -- a working style coinciding with Gittins' optimal-stopping based algorithm for the classic Baysian multi-armed bandit problem.
\end{abstract}


\section{Introduction}

Consider the problem of minimizing the \emph{sum of completion time} to serve $n$ customers with $k$ identical service providers, where the customers are ordered by a \emph{first-come-first-serve} queue so that any service provider must serve the customers assigned to it in the same order as the queue (but two customers may be served simultaneously if they are assigned to different providers). Suppose a customer $i$ is the first customer of a provider that serves $m$ customers in total, since all the other $m-1$ customers of the same provider must wait for customer $i$ to complete in time $T_i$, the completion time ``caused by" customer $i$ is thus $m \cdot T_i$. Summing up over all customers, we have
\begin{equation} \label{eq:positional}
C_{total} = \sum_{i=1}^n f(p_i) \cdot T_i 
,\end{equation}
where $p_i$ is the \emph{position} of customer $i$ in the working list of its service provider, and $f(p_i)$ equals to the \emph{reversed position} of customer $i$, i.e., if a provider serves $m$ customers in total, then $f(p_i)=m+1-p_i$, which equals to $m$ for the first customer and $1$ for the last customer, in particular.

The sum-of-completion-time problem presented above is a special case of the general problem of \emph{Positional Allocation} studied in this paper. In general, we want to minimize the total cost to run a sequence of tasks by multiple agents under the \emph{positional cost model} as Eq. \ref{eq:positional} shows, where the cost to run a task not only depends on the independent cost of the task but also \emph{monotonically} related to the number of tasks the agent has been assigned to before. The sum-of-completion-time problem already shows how such a positional effect can be naturally derived from the optimization objective of minimizing total completion time. Moreover, the positional cost may alternatively come from an abstraction of the varying \emph{efficiency} when an agent works in reality. For example, the so-called \emph{learning effect}~\cite{positional_learning:1999} \cite{positional_learning:2000} usually helps a human agent to work more and more efficiently, while on the other hand various \emph{deteriorating effects}~\cite{time_deteriorating:1990} may just do the opposite (e.g. human/animals can get tired during working, devices can wear off during the usage, or the working situation is just getting worse over time in medical treatment and diseaster rescue). See Appendix \ref{sec_battery} for a motivating application for the authors, where the goal is to optimize the overall battery efficiency in electrical systems powered by multiple batteries, in light of the phenomenon that the power efficiency of battery gets worse as the battery is discharging. 

In cases where the tasks can be run in arbitrary order, we know from the \emph{rearrangement inequalities} by Hardy et al.~\cite{inequality:1952} that the problem can be solved by a simple \emph{Shortest-Processing-Time} (SPT) rule that always matches the task with the shortest processing time to the position with the largest weight. However, when the \emph{non-reorderable} cosntraint is imposed, which could be either caused by a priority of the tasks (such as in a queue) or by the online nature of the problem, the greedy SPT rule becomes suboptimal (see Figure \ref{fig:counter_example} in Appendix \ref{sec_counter_example} for a counter-example), and there seems to be no obvious way to solve the problem in polynomial time.  

In this paper we propose a simple but nontrivial algorithm that runs in $O(k^2 n)$ time, and we show that this algorithm is optimal under any problem instance with two task-types. The algorithm works by making greedy decision for each task \emph{sequentially} based on stopping thresholds in a ``greedy-like" allocation simulation. We expect the combinatorial structures of the problem exhibited by our algorithm can inspire the design of practical and optimal algorithms in more general settings of this important problem. 


\subsection{Connections with Related Work}

Both sum-of-completion-time (or, the \emph{mean flow time}) and max-of-completion-time (i.e. the \emph{makespan}) are extensively studied optimization objectives. It is widely known that the makespan optimization problem is strongly NP-hard even assuming a constant positional function $f(p) \equiv 1$, both for its uncapacitated version (i.e.,  multiprocessor scheduling/bin packing) and capacitated version (i.e., the 3-partition problem). Many studies were thus focusing on designing asymptotic-PTAS or constant-ratio approximation algorithms for makespan optimizations, especially under generalized cost functions~\cite{bp:1994} \cite{bp:2012} \cite{bp:2006} \cite{bp:2008}. The objective functions of most these generalized cost models are \emph{symmetric} with respect to the tasks/items, and thus the order of items has no impact on the aggregate value. In contrast, in the positional allocation problem the ``real cost" of an item further depends on \emph{where} it is put in the bin. 

Meanwhiles, another line of research tries to find polynomial-time exact algorithms for the makespan optimization problem assuming a constant number of task/item types. Specifically, Leung \cite{bp-c:1982} presented an $O(n^{2c-2} \log k)$ dynamic programming algorithm, where $c$ is the number of item types. However, since problems in such a setting can admit compact inputs (in fact, encoded by only $2c$ numbers), the running time of a polynomial algorithm for such \emph{high-multiplicity} problems needs to be polynomial to $\log n$ (rather than to $n$). A polynomial algorithm for the case $c=2$ was first given by McCormick, Smallwood and Spieksma in \cite{bp-c:1997}. Later, Eisenbrand and Shmonin~\cite{bp-c:2006} showed that actually only $3$ different ``packing ways" are needed for the case of $c=2$. Very recently, Goemans and Rothvo{\ss}~\cite{bp-c:2014} extended the techniques in \cite{bp-c:2006} and gave the first polynomial-time bin packing algorithm for arbitrary (but constant) number of item types. Our work pursuits similar goals with these works, trying to find polynomial-time algorithms for problem instances with constant number of task/item types, albeit with a different optimization objective encompassing flow-time. Essentially, the metrics of makespan and flow-time correspond to the $L_{inf}$ norm and $L_1$ norm, respectively (see Eq. \ref{eq:total_cost} in Section \ref{sec_preliminary}). Besides, also note that the order of tasks in the input sequence plays a crucial role in the positional allocation problem considered in this paper, which means binary instances of this problem cannot be compressed into a sequence of multiplicities, but will have the same input format as the general form of the problem, thus having $O(n)$ input length.

On the other hand, scheduling under positional costs is also an active area in operations research. Biskup and others~\cite{positional_learning:1999} \cite{positional_learning:2000} first considered the learning effects in single-machine scheduling, in which the positional weights decrease with the positions, typically modeled by explicit polynomial functions in most later works~\cite{positional_learning:2001}. See \cite{positional_learning:2008} for a survey of them. Browne and Yechiali~\cite{time_deteriorating:1990} first introduced the deteriorating effects in scheduling problems, in which the positional weights increase with the positions, typically modeled by explicit polynomial functions~\cite{positional_deteriorating:2005} or exponential functions~\cite{positional_deteriorating:2008}. See \cite{positional:2012} for a survey of works in this line. Besides, some works also considered the parallelel machine scheduling problems with positional costs~\cite{positional:2001}  \cite{positional:2012}. In most of the positional scheduling works presented above, the key is to find good \emph{permutations} of the task sequence so as to minimize the objectives considered, and the classic Short-Processing-Time-first (SPT) rule turns out to be optimal in various settings (e.g. see the summary table in \cite{positional_learning:2008}). Differently, the positional allocation problem considered here imposes a strict \emph{nonorderable} constraint on the order of tasks. Note that the constraint is different from classic \emph{precedence constriants} in that, the former only constrains the \emph{order} of tasks in the same machine (but it is possible to run tasks in parallel in different machines) while the latter further rules out any parallelism between tasks with precedence relationship. It turns out that the nonorderable constraint invalidates the mostly-used SPT rule. Actually, as demostrated later, the positional allocation problem studied in this paper exhibits a quite different combinatorial structure, which leads to practical optimal algorithms quite different with the SPT rule.

Finally, there is an interesting connection between the positional allocation problem and the \emph{Baysian Multi-Armed Bandit} (MAB) problem . In the Baysian MAB problem, we are given $k$ ``bandit-arms", and each arm $r$ is in an observable ``state" $\Q{r}{t}$ at time-slot $t$. In any round $t$, we are asked to choose one arm $x_r$, leading to a stochastic payoff $\alpha^t \cdot R(\Q{x_t}{t})$, and also causing the chosen arm $x_t$ to stochastically change its state to $\Q{x_t}{t+1} = P(\Q{x_t}{t})$. Actually, the function $R$ corresponds to a parameterized probability distribution of payoff, the state $\Q{r}{t}$ corresponds to the parameter setting of $R$ that the player ``believes" the arm $r$ ``should be in" at time $t$, and the state transition function $P$ follows the Baysian inference principle for probability distribution $R$. The goal is to maximize the cumulative reward at infinite horizon. It is not hard to see that the Baysian MAB problem is essentially a stochastic version of the positional allocation problem, with a special task sequence $\{T_t = \alpha^t\}$ on one hand, while with general stochastic transition functions on the other hand (in the positional allocation problem, $P(\Q{r}{t}) = \Q{r}{t}+1$, and $R(\Q{r}{t})=0$ when $\Q{r}{t} > m$). In 1979, Gittins found an elegant \emph{simulation-based} algorithm~\cite{gittins:1979}, which is proven to be optimal for the Baysian MAB problem~\cite{gittins:1999}. Describing in our language, given an infinite task sequence $\alpha^1,\alpha^2,\alpha^3,\dots$, where $0<\alpha<1$, the algorithm assigns a score $V_r$ to each agent $r$ according \emph{solely} to the current capacity $m_r$ of that agent (i.e., $V_r$ is independent to any other $m_{r'\neq r}$), then the algorithm simply allocates the first task $T_1$ to the agent with the highest score. The score, later called the \emph{Gittins Index}, happens to correspond to the (expected) normalized cumulative cost of the \emph{optimal stopping strategy} of a simulation to keep allocating tasks in the single machine $r$. Note that the task sequence in Baysian MAB is by default sorted in a Shortest-Processing-Time-first manner~\footnote{More accurately, the Baysian MAB problem, as a reward maximization problem, always receives exponentially decreasing inputs, which translates to an increasing (SPT-first) task sequence in its dual problem concerning cost minimization.}, and due to the monotonicity of the positional function, the Gittins-Index-based algorithm degenerates to the naive SPT algorithm in the positional allocation problem, which is known to be suboptimal in general (but indeed optimal for that specific single task sequence!). Interestingly, as shown in this paper, it turns out that the truely optimal algorithm for the positional allocation problem may still exhibit a very similar working pattern with the Gittins' algorithm (at least for binary inputs), namely that the optimal decision for each task can be made \emph{sequentially} based on the ``stopping threshold" of a ``greedy-like" allocation simulation.

\newcommand{\aaa}[2]{a^{#1}_{#2}}
\newcommand{\aaA}[1]{\mathbf{a}^{#1}}
\def\T{{\rm T}}

\section{Preliminaries} \label{sec_preliminary}


\begin{definition} \label{def:allocation}
A \emph{$k$-allocation scheme of $n$ tasks} is a partitioning of the sequence $1,2,\dots,n$ into $k$ subsequences, denoted by $\mathbf{A}_{n,k} = (\aaA{1}, \aaA{2}, \dots, \aaA{k})$, where (1) $\aaa{r}{i} \neq \aaa{s}{j}$ if $(r,i) \neq (s,j)$ (disjointness); (2) $\sum_{r=1}^k |\aaA{r}| = n$ (completeness); (3) $\aaa{r}{i} < \aaa{r}{j}$ if $i < j$, for any $1\leq r \leq k$ (monotonicity).
\end{definition}

The \emph{size} of an allocation scheme $\mathbf{A}_{n,k}$ is $m$ if $|\aaA{r}|=m$ for every $\aaA{r} \in A_{n,k}$. Note that only an allocation scheme with uniform cardinality has a well-defined size. The \emph{position} of task $t$ under allocation scheme $\mathbf{A}_{n,k}$ is $p_t$ if there exists $r \in [k]$ such that $\aaa{r}{p_t}=t$ . Given an allocation scheme $\mathbf{A}_{n,k}$, any integer sequence $T_1, T_2, \dots, T_n$ can be accordingly partitioned into $k$ subsequences, denoted by $\mathbf{T}_{n,k} = (\{T_{\aaa{1}{p}}\}, \{T_{\aaa{2}{p}}\},$ $ \dots, \{T_{\aaa{k}{p}}\})$. For convenience we will write $T_{r,p}$ for $T_{\aaa{r}{p}}$ when the context is clear.

\vspace{0.1in} \noindent \textbf{Positional Allocation Problem. } 
Given a problem instance $\mathbf{I}=(n,k,m,$ $f,$ $T_1, \dots, T_n)$ where $n = k m$, $T_i \in \mathbb{N}$, and $f : \mathbb{N} \mapsto \mathbb{N}$ is arbitrary monotonically decreasing function. We want to find an allocation scheme $A_{n,k}$ of size $m$ in order to  
\begin{equation} \label{eq:total_cost}
\min_{\mathbf{A}_{n,k}} 	 ~~~ \left|\left| 
\left( \begin{array}{cccc}
T_{1,1} & T_{1,2} & \ldots & T_{1,m} \\
\vdots & \vdots & \vdots \\
T_{k,1} & T_{k,2} & \ldots & T_{k,m}
\end{array} \right) 
\cdot
\left( \begin{array}{c} f(1) \\ f(2) \\ \vdots \\ f(m) \end{array} \right)
\right|\right|_1
\end{equation}

In the above problem formulation, a positional allocation algorithm outputs an allocation scheme $\mathbf{A}_{n,k}$ for a given problem instance. Equivalently, a positional allocation algorithm may output a \emph{decision sequence} $\x = (x_1,\dots, x_n)$ where each $x_t \in [k]$ denotes the index of the agent assigned to the task $t$. Clearly, any decision sequence corresponds to a unique allocation scheme. The following lemma shows that the reverse is also true: any given allocation scheme also corresponds to a unique decision sequence. In other words, an allocation scheme is equivalent to a decision sequence. See the proof in Appendix \ref{proof:allocation}.
\begin{lemma} \label{lemma:allocation}
For any allocation scheme $\mathbf{A}_{n,k} = (\aaA{1},$ $\aaA{2}, \dots, \aaA{k})$, there exists a unique decision sequence $\mathbf{x}=(x_1, \dots, x_n)$ such that $\aaa{x_t}{p_t} = t$ for each $1\leq t \leq n$.
\end{lemma}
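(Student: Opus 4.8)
The plan is to read the required decision sequence directly off the combinatorial data of the allocation scheme and then establish uniqueness by contradiction. First I would observe that the assignment $t \mapsto (r,p)$, sending each task $t$ to the pair for which $\aaa{r}{p}=t$, is well-defined. By disjointness the entries of the scheme are pairwise distinct, so $t$ equals at most one of them; by completeness there are exactly $n$ such entries, which must therefore exhaust $\{1,\dots,n\}$, so $t$ equals at least one of them. Hence there is exactly one pair $(r,p)$ with $\aaa{r}{p}=t$, the position $p_t$ referenced in the statement is unambiguous, and the scheme in fact encodes a bijection between $\{1,\dots,n\}$ and the set of (agent, position) pairs $\{(r,p) : 1\le r\le k,\ 1\le p\le |\aaA{r}|\}$. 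This observation is the real (if modest) content of the lemma; once it is in place, everything else is bookkeeping.

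For existence, I would set $x_t$ to be the first coordinate $r$ of the unique pair associated with $t$, which lies in $[k]$ by the definition of the scheme. Then $\aaa{x_t}{p_t} = \aaa{r}{p} = t$ by construction, so $\x = (x_1,\dots,x_n)$ is a valid decision sequence.

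For uniqueness, suppose $\x$ and $\x'$ both satisfy $\aaa{x_t}{p_t} = t = \aaa{x'_t}{p_t}$ for every $t$, and suppose toward a contradiction that $x_t \neq x'_t$ for some $t$. Then $(x_t, p_t)$ and $(x'_t, p_t)$ are distinct index pairs, so the disjointness condition forces $\aaa{x_t}{p_t} \neq \aaa{x'_t}{p_t}$; but both equal $t$, a contradiction. Hence $x_t = x'_t$ for all $t$, and the decision sequence is unique. The only step needing genuine care is the well-definedness of $p_t$ noted above, which must be confirmed before the decision sequence can even be named; the existence and uniqueness arguments then follow immediately from the three defining conditions of an allocation scheme.
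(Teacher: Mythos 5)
Your argument is correct for the statement as literally written, but it takes a genuinely different route from the paper's, and the difference is worth spelling out. The paper proves the lemma constructively with a decoding algorithm (Algorithm 3) that simulates the sequential assignment: it keeps a pointer $p_r$ into each $\aaA{r}$, at step $t$ assigns $x_t$ to the agent whose current head entry equals $t$, and then argues by contradiction --- using monotonicity together with completeness --- that this simulation never gets stuck. You instead read $x_t$ directly off the bijection between tasks and (agent, position) pairs supplied by disjointness and completeness, and you never invoke monotonicity at all. For the literal condition $\aaa{x_t}{p_t}=t$, with $p_t$ defined by the scheme as in the preliminaries, your two-line existence-and-uniqueness argument is airtight and is arguably cleaner than the algorithmic detour. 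The caveat is that the lemma is used in the paper to assert that every allocation scheme is \emph{achievable} by sequentially appending task $t$ to agent $x_t$, i.e., that under this sequential execution task $t$ actually lands at position $p_t$; that is precisely where monotonicity enters, and it is the hypothesis your proof leaves untouched. The missing (one-line) observation is: since $\aaA{x_t}$ is increasing, the entries preceding $t$ in that list are exactly the tasks $s<t$ with $x_s=x_t$, so the sequential process places $t$ at position $p_t$. Without it, your construction would apply verbatim to a non-monotone ``scheme,'' whose decision sequence does \emph{not} reproduce it. Adding that sentence recovers the full content of the paper's proof while keeping your more direct structure.
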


In the rest of this paper we will mainly discuss algorithms assuming they output decision sequences, and as a general technique, we often prove the sub-optimality of a given decision sequence $\x$ by re-arranging some tasks in the allocation scheme corresponding to $\x$ and showing that the re-arranged allocation scheme has lower cost than the original one. In such a proof, the monotonicity property of allocation scheme is the key to guarantee that the re-arranged allocation scheme is still valid (i.e. ``achievable" by some decision sequence). 

%
\vspace{0.1in}

We remark that our problem formulation encompasses some other related models. For example, although in our formulation each agent must be assigned \emph{exactly} $n/k$ tasks, both the problem variants with and without cardinality constraints (in which an agent $r$ can run \emph{at most} $m$ tasks and \emph{arbitrary} number of tasks, respectively) can be reduced to the problem formulated here by appending enough number of ``null tasks" with $T_i=0$. Furthermore, although our formulation assumes the positional cost function $f$ is decreasing, a problem instance with increasing function can be reduced to an instance in our model by reversing the task sequence. Specifically, for any instance $\mathbf{I}=(n,k,m,g,T_1, T_2, \dots, T_n)$ with increasing function $g$, we can construct an instance $\mathbf{I}'=(n,k,m,f,T_n,\dots,T_2,T_1)$ with the decreasing function $f(p)=g(m+1-p)$, and from Lemma \ref{lemma:allocation} we know that: if $\x' = (x'_1,x'_2, \dots, x'_n)$ is the solution of $\mathbf{I}'$ in our model, then $\x=(x'_n, \dots, x'_2, x'_1)$ is the solution of $I$ in the model with increasing positional function (and vise versa). Similarly, the same reduction also works for the problem variant with decreasing positional function but reversely-growing positional index (from $m$ back to $1$). Also note that the tricks presented above can be further combined together to reduce more combinations of variants to our problem. For example, let $\mathbf{I}=(n,k,T_1, T_2, \dots, T_n)$ be an instance of the sum-of-completion-time problem presented at the beginning of the paper. The ``equivalent instance" of $\mathbf{I}$ in our model is $\mathbf{I}' = (n,k,n,f, T_n, \dots, T_2, T_1, 0,0,\dots,0)$ where there are $(kn-n)$ ``null tasks" in $\mathbf{I}'$ and $f(p)=n+1-p$. 

Finally, the current probem formulation is presented in a form for the sake of simplicity, and the algorithms presented in this paper may apply to some natural generalizations of the problem. For example, in our formulation every agent is assigned with the same number of tasks, while our algorithmic results also apply to problems with arbitrary capacity plan $(m_1, m_2, \dots, m_k)$ in which agent $r$ may be assigned with (exactly or at most) a different number of $m_r$ tasks. Besides, in this paper we couple the position weights and task-specific costs with the multiplication operator (see Eq. \ref{eq:positional} and Eq. \ref{eq:total_cost}), while the algorithmic discussions also apply to the more general cases where the cost function $f(p_i,T_i)$ is monotone and  has positive mixed partial derivaties $\frac{\partial^2 f}{\partial p \partial T} = \frac{\partial^2 f}{\partial T \partial p} > 0$. 

\vspace{0.1in} \noindent \textbf{The Shortest-Processing-Time Rule. } 
The monotonicity of the positional cost function implies a simple ``principle" to allocate tasks: In general, we tend to run tasks with relatively smaller costs first (thus coupled with larger positional weights) while to allocate tasks with relatively larger costs later (thus coupled with smaller positional weights). Lemma \ref{lemma:exchange} justifies this intuition formally.
\begin{lemma} \label{lemma:exchange}
(Rearrangement Inequality \cite{inequality:1952}) If a monotonically increasing function $f(Q, T) : \Rnn \times \Rnn \mapsto \Rnn$ has positive mixed derivatives $\frac{\partial^2 f(Q,T)}{\partial Q \partial T} = \frac{\partial^2 f(Q,T)}{\partial T \partial Q} > 0$, then for any $Q_1 \geq Q_2$, and $T_1 \geq T_2$, we have 
\begin{equation}
f(Q_1, T_1) + f(Q_2, T_2) \geq f(Q_1, T_2) + f(Q_2, T_1)
\end{equation}
\end{lemma}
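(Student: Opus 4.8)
The plan is to reduce the claimed inequality to the positivity of the mixed partial derivative by rewriting it as a statement about a double integral over the rectangle $[Q_2, Q_1] \times [T_2, T_1]$. First I would rearrange the desired inequality into the equivalent ``cross difference'' form
\[
\big[f(Q_1, T_1) - f(Q_1, T_2)\big] - \big[f(Q_2, T_1) - f(Q_2, T_2)\big] \geq 0,
\]
which isolates the value of $f$ over the four corners of the rectangle. This form captures exactly the intuition behind a positive mixed derivative: the marginal gain from raising $T$ from $T_2$ to $T_1$ is itself larger at the bigger argument $Q_1$ than at $Q_2$.

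Next I would express this cross difference as an iterated integral. Applying the fundamental theorem of calculus in the $T$-variable gives $\int_{T_2}^{T_1} \frac{\partial f(Q,T)}{\partial T}\, dT = f(Q,T_1) - f(Q,T_2)$ for each fixed $Q$; differentiating under the integral sign in $Q$ and integrating once more yields
\[
\big[f(Q_1, T_1) - f(Q_1, T_2)\big] - \big[f(Q_2, T_1) - f(Q_2, T_2)\big] = \int_{Q_2}^{Q_1} \int_{T_2}^{T_1} \frac{\partial^2 f(Q,T)}{\partial Q \, \partial T} \, dT \, dQ.
\]
Since $Q_1 \geq Q_2$ and $T_1 \geq T_2$, both outer limits are correctly oriented, so the domain of integration is a (possibly degenerate) rectangle of nonnegative area; as the integrand is strictly positive by hypothesis, the whole integral is $\geq 0$. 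Rearranging back recovers the claim, with equality precisely when the rectangle degenerates (i.e. $Q_1 = Q_2$ or $T_1 = T_2$).

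Alternatively --- and this is really the same argument stated without integrals --- I would set $\phi(Q) \define f(Q, T_1) - f(Q, T_2)$ and show that $\phi$ is nondecreasing: its derivative $\phi'(Q) = \frac{\partial f(Q,T_1)}{\partial Q} - \frac{\partial f(Q,T_2)}{\partial Q}$ is nonnegative because $\frac{\partial f}{\partial Q}$ is increasing in $T$ (which is exactly what $\frac{\partial^2 f}{\partial T \partial Q} > 0$ asserts) together with $T_1 \geq T_2$. Then $Q_1 \geq Q_2$ gives $\phi(Q_1) \geq \phi(Q_2)$, which is the inequality.

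I do not expect a genuine obstacle here: this is the classical smooth rearrangement inequality, and the mixed-derivative hypothesis is tailored precisely so that the two-line integral argument goes through. The only point requiring mild care is the regularity needed to interchange differentiation and integration (equivalently, to invoke Fubini on the mixed partial); under the implicit assumption that $f$ is $C^2$ with continuous mixed partials, this is routine, and the symmetry $\frac{\partial^2 f}{\partial Q \partial T} = \frac{\partial^2 f}{\partial T \partial Q}$ assumed in the statement is exactly Clairaut's theorem, so either order of iteration gives the same value.
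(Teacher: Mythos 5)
Your argument is correct. Note that the paper itself does not prove this lemma at all---it is stated with a citation to Hardy, Littlewood and P\'olya and used as a black box---so there is no in-paper proof to compare against; your write-up simply supplies the standard justification. Both of your formulations (the double integral $\int_{Q_2}^{Q_1}\int_{T_2}^{T_1}\frac{\partial^2 f}{\partial Q\,\partial T}\,dT\,dQ \geq 0$ over a correctly oriented rectangle, and the equivalent monotonicity of $\phi(Q)=f(Q,T_1)-f(Q,T_2)$) are sound under the implicit $C^2$ regularity, and your observation that equality holds exactly when $Q_1=Q_2$ or $T_1=T_2$ is a small bonus beyond what the statement requires. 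The one caveat worth keeping in mind is that the lemma as stated assumes differentiability, whereas the paper's main use of it is on integer-valued positions and costs via the product $f(p)\cdot T$; for that discrete application the relevant fact is really supermodularity of the product, which your integral argument establishes for the smooth extension and which restricts to the lattice points without issue.
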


The above principle suggests that, if we \emph{could} arbitrarily change the order of the tasks, the optimal algorithm would be simply to sort the tasks in ascending order and to assign them among the agents in a round-robin way. However, the nonreorderable constraint of the problem brings additional difficulty that we are forced to allocate tasks sequentially. In this case, a naive greedy allocation algorithm may be to couple a large (small) task with the smallest (largest) positional weight \emph{at that time}. This naive greedy algorithm turns out to be sub-optimal. For example, Figure \ref{fig:counter_example} in the appendix shows an instance with binary costs ($k=4$ and $n=80$), where the allocation scheme of the naive greedy algorithm (the left side) has larger cost than another allocation scheme (the right side). Actually, the better allocation scheme at the right side is optimal under this instance, and it comes from a simple and efficient algorithm proposed in the next section. In Section \ref{sec:optimality} we prove that this algorithm minimizes the total cost for any binary-valued instance of the positional allocation problem.


\section{The Algorithm}
We assume $T_i \in \{H,L\}$, $0\leq H<L$ in this section. A pragmatic motivation of the assumption is that, often in real world the task-specific costs follow a \emph{bimodal} distribution, in which case a two-value separation may well approximate the real values. For example, in the multi-battery application presented in Appendix \ref{sec_battery}, the power consumption of a device can greatly depend on whether the device is active or on standby, with a huge gap of more than 100x (See Figure \ref{power_gap}). Moreover, detecting the active-standby mode of a device is usually much more efficient, robust, and easier than measuring the exact value of the power consumption of the device.

A straightforward $k$-dimensional dynamic programming procedure can solve the positional scheduling problem. Specifically, define $\mathbf{m}=(m_1,m_2, \dots, m_k)$ as the state vector for the situation where there are $m_r$ ``slots" left in each agent $r$. Given a task sequence $\mathbf{T} = (T_1,T_2, \dots, T_{\sum m_r})$, define $COST_{\mathbf{T}}(\mathbf{m})$ as the minimum total cost over all possible decision sequences that matches $\mathbf{T}$ to $\mathbf{m}$. By definition of the positional allocation problem we have
\begin{equation*}
COST_{\mathbf{T}}(\mathbf{m}) = \min_{r \in[k]}  \left\{~ COST_{\mathbf{T} \setminus T_1}(m_1,\dots, m_r-1, \dots, m_k) + f(m-m_r) \cdot T_1 ~\right\}
.\end{equation*}
Since the size of the state space is no more than $n^k$, the time complexity of the dynamic programming procedure is polynomial to $n$ for constant $k$. For general $k$, however, the size of the state space is at least the \emph{integer partition function} $p(\frac{n}{2}) = 2^{\Theta(\sqrt{n})}$ (e.g., when $k=\frac{n}{2}$), which is super-polynomial to the input size. Meanwhiles, another drawback of the dynamic programming solution is that it may not easily adapte to the cases where the information of the tasks is limited, such as in online allocation scenarios where we (at best) only know a stochastic generating process of the workload, rather than a deterministic task sequence.

\begin{algorithm}[t] \label{algo:greedy}
\caption{The basic version of the simulation-based algorithm}

\SetKw{Continue}{continue}
\SetKwFunction{TA}{ThresholdAllocation} 

\KwIn{$ m_1 \dots m_k, T_1 \dots T_n$, where $m_1\leq m_2 \leq \dots \leq m_k$}
\KwOut{$x_1 \dots x_n$}
\BlankLine

set $\Q{r}{1} = m_r$ for each $1 \leq r \leq k$

\For{$t = 1$ \KwTo $n$}
{
	$x_t \gets$ \TA ($\Q{1}{t} \dots \Q{k}{t}, T_t \dots T_n$)  

	$\Q{r}{t+1} \gets \Q{r}{t} -\mathds{1}(x_t = r)$ ~~for each $1 \leq r \leq k$
}

\Return $\mathbf{x}$  

\BlankLine
\BlankLine
\BlankLine

\KwSty{Function} \TA \label{algo:greedy_single}

\KwIn{$ m_1, \dots, m_k, T_1, \dots, T_n$}
\KwOut{the agent to which $T_1$ is assigned}

	\For{$\gamma = k$ \KwTo $1$}
	{
		\If{$m_{\gamma} = m_{\gamma-1}$}
		{  
			\Continue   \label{line:stable2}
		}

		\For{$h = \gamma$ \KwTo $k$}
		{
			$S_L = \{ 1,2,\dots,h-1 \}$  
			
			$S_H = \{ \gamma,\gamma+1, \dots, h \}$  

			$Z_L \gets \sum_{i \in S_L} \min \{m_i, m_{\gamma-1}\}$  

			$Z_H \gets \sum_{i \in S_H} (m_i - m_{\gamma-1})$  
			
			\BlankLine

			$\mathcal{L} = \{i: i \in \{1, \dots, Z_L+Z_H\} \text{~~and~~} T_i > T_1\}$  

			\If{$|\mathcal{L}| \geq Z_L$} 
			{
				\Return $\gamma$  \label{line:question}
			}
		}
	}
\end{algorithm}

 \begin{figure*} [t]
	\begin{centering}
	\includegraphics[width=1.0 \textwidth]{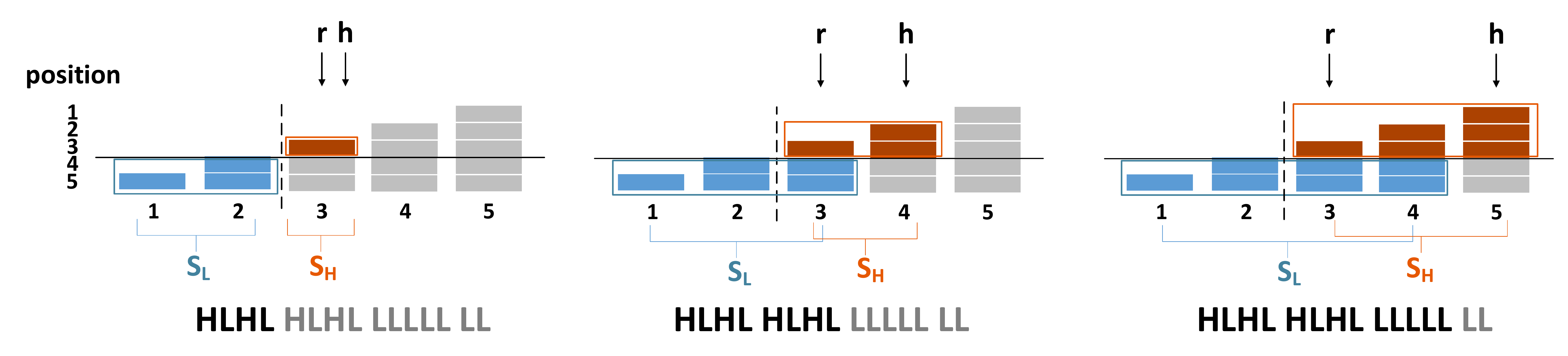}
	\caption{Illustration of a simulation process performed by the ThresholdAllocation routine of Algorithm \ref{algo:greedy} for $r=3$ under the task sequence $HLHL~HLHL~LLLLL~LL$ ($n=15$, $k=5$). The three diagrams correspond to the initial setup and two later expasions of H-zone (positions in red) and L-zone (positions in blue) in the simulation, respectively. In particular, when $h=5$ (the right side), the H-zone is of size $Z_H=6$ and the L-zone is of size $Z_L=7$. Since there are at least $7$ ``L" tasks in the first $7+6=13$ tasks, the algorithm will assign the first task to agent $r=3$.}
	\label{fig:algorithm}
	\end{centering}
\end{figure*}

In this section, we present a simple algorithm that turns out to generate the optimal decision sequence for any problem instance with binary-valued task sequence. The basic version of the algorithm is shown in Algorithm \ref{algo:greedy}. Again we use $(m_1, m_2,\dots, m_k)$ to denote the cardinality capacities of agents, and without loss of generality assume $m_1\leq m_2 \leq \dots \leq m_k$. To allocate the first task $T_1$, the algorithm iterates over each resource $r$ in the descending order (priority) to decide whether to put $T_1$ in $r$. The decision for each $r$ is made via doing a ``simulated allocation" as follows (Line $12 \sim 21$ in Algorithm \ref{algo:greedy}): the simulation first setups a \emph{L-zone} and a \emph{H-zone} among all the available positions, then sequentially allocates the task sequence, starting from $T_1$, by sending all ``large" tasks (i.e. the ones with $T_i >T_1$) in the L-zone and all ``small" tasks (the ones with $T_i \leq T_1$) in the H-zone. Whenever the ``small" tasks overflow from the H-zone, both the H-zone and the L-zone expand, and the simulated allocation continues. If at any time in the simulation the L-zone is filled up by ``large" tasks (is full), the algorithm immediately stops the simulation and allocates $T_1$ to agent $r$; otherwise it will choose some other agent with smaller id in the later simulations. Note that the algorithm will guarantee to choose agent $1$ (which has the smallest capacity) if getting chance to run simulation on it (i.e. with $r=1$). Also note that the task with the largest cost $\max T_i$ (i.e. L in this context) will always be assigned to the smallest non-empty agent.

The initialization and expansions of H-zone and L-zone are based on the current capacities of agents $(\Q{1}{t}, \dots ,\Q{k}{t})$ and the index $r$ of the (agent) candidate on which the simulation focuses. Let $p^*$ be the ``current position" of agent $r-1$ (so $p^*=m+1-\Q{r-1}{t}$). At any time of the simulation the H-zone only contains positions smaller than $p^*$ while the L-zone only contains positions equal or larger than $p^*$. Initially, the H-zone contains all such legal positions for the agent $r$ (i.e., the single position of $p^*$ in agent $r$), and the L-zone contains all legal positions for agents smaller than $r$. In every expansion, both the H-zone and L-zone include legal positions of one more agent, from $r+1$ to $k$. Figure \ref{fig:algorithm} illustrates an example simulation process with five agents and $r=3$. 

Given a H-zone and L-zone setup, to check whether the L-zone can be filled up before the H-zone overflows, the algorithm only needs to count whether the number of ``large" tasks within a look-ahead ``window" exceeds a ``threshold". Specifically, Let $Z_H$ and $Z_L$ be the sizes of H-zone and L-zone respectively, the L-zone can be ``successfully" filled up if and only if there are at least $Z_L$ ``large" tasks in the first $Z_H+Z_L$ tasks. Suppose task $t$ is the head of the current task sequence, the last task to check in the look-ahead will be $Z_L+Z_H+t-1$, which will be called the \emph{look-ahead horizon from task $t$ under setup $(r,h)$} in the rest of the paper, or just \emph{horizon} when the context is clear. 

Note that the output of Algorithm \ref{algo:greedy} does not depend on the specific form of the positional function $f(p)$ at all, nor on the specific values of the task $T_i$, nor even on the detailed pattern of how the tasks are arranged within each look-ahead window. Moreover, the following lemmas establish some monotonicity properties of Algorithm \ref{algo:greedy}. Specifically, Lemma \ref{lemma:capacity} asserts that the algorithm never changes the order of the capacities of agents, i.e., the monotonicity of agent-capacities is conserved; Lemma \ref{lemma:horizon} asserts that the look-ahead horizons under any given setup $(r,h)$ always move forward in the direction from $1$ to $n$, i.e., the monotonicity of horizon for given look-ahead setup is conserved; and Lemma \ref{lemma:decision} asserts that the allocation decisions for the same task type always move forward in the direction from agent $1$ to $k$, i.e. the monotonicity of agent id's assigned to given task-type is conserved.

\begin{lemma} \label{lemma:capacity}
Under any problem instance $I = (m_1 \dots m_k, T_1 \dots T_k)$ with two types of task, $m_1 \leq m_2 \leq \dots \leq m_k$, for the agent-capacity variables $\{\Q{r}{t}\}$ computed by Algorithm \ref{algo:greedy} under $I$, we have $\Q{r}{t} \leq \Q{s}{t}$ if $r < s$, for any $1\leq t \leq n$.
\end{lemma}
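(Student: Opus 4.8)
The plan is to prove the statement by induction on the time index $t$, showing that the sorted ordering $\Q{1}{t} \leq \Q{2}{t} \leq \dots \leq \Q{k}{t}$ is an invariant preserved by every iteration of the main loop of Algorithm \ref{algo:greedy}. The base case $t=1$ is immediate, since $\Q{r}{1} = m_r$ and the instance is given with $m_1 \leq m_2 \leq \dots \leq m_k$. For the inductive step I would assume the ordering holds at time $t$ and examine the single update $\Q{r}{t+1} \gets \Q{r}{t} - \mathds{1}(x_t = r)$, which decrements exactly the chosen agent $\gamma = x_t$ by one and leaves all other capacities unchanged. Note that \texttt{ThresholdAllocation} is invoked at step $t$ with inputs $\Q{1}{t}, \dots, \Q{k}{t}$, so the local quantity $m_\gamma$ inside the routine is precisely $\Q{\gamma}{t}$.

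Because only one coordinate changes, the only adjacent pair whose ordering can possibly be broken is $(\gamma-1,\gamma)$: any pair lying entirely to one side of $\gamma$ is untouched, and the pair $(\gamma,\gamma+1)$ only has its gap widened, so $\Q{\gamma}{t+1} \leq \Q{\gamma}{t} \leq \Q{\gamma+1}{t} = \Q{\gamma+1}{t+1}$. It therefore suffices to establish $\Q{\gamma-1}{t+1} \leq \Q{\gamma}{t+1}$, i.e. $\Q{\gamma-1}{t} \leq \Q{\gamma}{t} - 1$. The crux is the observation that the routine can only return $\gamma$ \emph{after} $\gamma$ survives the guard on line \ref{line:stable2}, which skips (via \texttt{continue}) any candidate with $\Q{\gamma}{t} = \Q{\gamma-1}{t}$. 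Hence whenever $\gamma$ is returned we have $\Q{\gamma}{t} \neq \Q{\gamma-1}{t}$, and combined with the inductive hypothesis $\Q{\gamma-1}{t} \leq \Q{\gamma}{t}$ this yields the strict inequality $\Q{\gamma-1}{t} < \Q{\gamma}{t}$. Since the capacities are integers, this is exactly $\Q{\gamma-1}{t} \leq \Q{\gamma}{t} - 1$, giving the required $\Q{\gamma-1}{t+1} \leq \Q{\gamma}{t+1}$. As adjacent monotonicity is equivalent to the full chain of inequalities, the invariant is restored at time $t+1$.

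The one point needing separate care is the boundary candidate $\gamma = 1$, for which there is no predecessor and the guard on line \ref{line:stable2} references the sentinel $m_0$; here I would adopt the convention that $m_0$ sits strictly below $m_1$ (so the guard never skips $\gamma=1$, consistent with the paper's remark that agent $1$ is always chosen once reached), and observe that decrementing $\Q{1}{t}$ imposes no left-hand constraint while the right-hand pair $(1,2)$ is handled exactly as above. I expect the main (indeed essentially the only) obstacle to be pinning down precisely that line \ref{line:stable2} is what forces the chosen agent to have strictly larger capacity than its predecessor; once this structural fact is isolated, the decrement-by-one argument together with the integrality of the capacities closes the induction with no further computation.
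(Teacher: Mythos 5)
Your proof is correct and rests on exactly the same observation as the paper's own (much terser) argument: the guard on line \ref{line:stable2} prevents the routine from ever returning an agent whose capacity equals that of its lower-indexed neighbour, so the single decrement per step cannot invert any adjacent pair. Your version merely formalizes this as an explicit induction on $t$ (and handles the $\gamma=1$ sentinel), which is a welcome tightening but not a different route.
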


\begin{lemma} \label{lemma:horizon}
Under any problem instance $I = (m_1 \dots m_k, T_1 \dots T_k)$ with two types of task, $m_1 \leq m_2 \leq \dots \leq m_k$, for any given look-ahead setup $(r,h)$, $1\leq r \leq h \leq k$, let $Z^{(r,h,i)}_H$ and $Z^{(r,h,i)}_L$ be the sizes of H-zone and L-zone (respectively) when Algorithm \ref{algo:greedy} is allocating task $i$ under setup $(r,h)$, and let $Z^{(r,h,j)}_H$ and $Z^{(r,h,j)}_L$ be the sizes of H-zone and L-zone (respectively) when Algorithm \ref{algo:greedy} is allocating task $j$ under the same setup $(r,h)$,  we have 
\begin{equation}
Z^{(r,h,i)}_H + Z^{(r,h,i)}_L + i -1 \leq Z^{(r,h,j)}_H + Z^{(r,h,j)}_L + j -1 \text{~~~~if~~~} i < j
.\end{equation}
\end{lemma}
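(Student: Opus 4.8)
The plan is to reduce the claim to a one-step (task $t$ to task $t+1$) monotonicity argument, after first collapsing the combined zone size $Z^{(r,h,t)}_H + Z^{(r,h,t)}_L$ into a closed form in the capacities $\Q{1}{t}, \dots, \Q{k}{t}$ that hold at the moment task $t$ is allocated. First I would invoke the capacity ordering $\Q{1}{t} \leq \dots \leq \Q{k}{t}$ guaranteed by Lemma \ref{lemma:capacity}, which is exactly what is needed to resolve the $\min$ in $Z_L = \sum_{i=1}^{h-1} \min\{\Q{i}{t}, c\}$, where I write $c = \Q{r-1}{t}$ (with the convention $\Q{0}{t} \equiv 0$ for the boundary case $r=1$). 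For $i \leq r-1$ the ordering gives $\min\{\Q{i}{t}, c\} = \Q{i}{t}$, while for $r \leq i \leq h-1$ it gives $c$. Combining this with $Z_H = \sum_{i=r}^h (\Q{i}{t} - c)$ and noting that $S_L$ and $S_H$ overlap on $\{r, \dots, h-1\}$, the two sums telescope to the single expression
\begin{equation} \label{eq:horizon_closed}
Z^{(r,h,t)}_H + Z^{(r,h,t)}_L = \sum_{i=1}^h \Q{i}{t} - \Q{r-1}{t} .
\end{equation}

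Next I would track how the horizon $H(t) \define Z^{(r,h,t)}_H + Z^{(r,h,t)}_L + t - 1$ changes in one step. Going from task $t$ to task $t+1$, exactly one capacity drops by one, namely $\Q{x_t}{t+1} = \Q{x_t}{t} - 1$, with all others unchanged. Hence, from the closed form \eqref{eq:horizon_closed}, the increment of $Z_H + Z_L$ is $\delta = -\mathds{1}(x_t \leq h) + \mathds{1}(x_t = r-1)$. Since $r \leq h$ forces $r-1 < h$, the event $x_t = r-1$ (which moreover requires $r \geq 2$) always implies $x_t \leq h$, so the two indicators can never combine to give $\delta = +1$; concretely $\delta \in \{-1, 0\}$. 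Therefore $H(t+1) - H(t) = \delta + 1 \in \{0,1\} \geq 0$, i.e.\ the horizon is nondecreasing in $t$, and telescoping this inequality from $i$ up to $j$ yields $H(i) \leq H(j)$ whenever $i < j$, which is precisely the claimed inequality.

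I expect the first step to be the main obstacle: the raw definitions of $Z_L$ and $Z_H$ mix a $\min$ with overlapping index sets, so it is not obvious a priori that their sum admits the clean form \eqref{eq:horizon_closed}. The monotonicity furnished by Lemma \ref{lemma:capacity} is exactly what lets me evaluate the $\min$ and make the telescoping go through; without it the sign of $\delta$ could not be pinned down. The only loose end is the boundary $r=1$, where $\Q{r-1}{t}$ must be read as $0$ (consistent with the algorithm always choosing agent $1$ once it reaches $\gamma = 1$); this case fits the same formula and the same one-step analysis, since then the term $\mathds{1}(x_t = r-1)$ is identically $0$.
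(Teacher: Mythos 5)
Your proof is correct, and it reaches the same one-step reduction as the paper (show that $Z^{(r,h,t)}_H + Z^{(r,h,t)}_L$ drops by at most $1$ per assignment, then telescope), but it establishes that step by a genuinely different and cleaner mechanism. The paper argues geometrically, by a case analysis on $x_t$ describing which positions the H-zone and L-zone gain or lose; you instead use Lemma \ref{lemma:capacity} to resolve the $\min$ in $Z_L$ and collapse the two overlapping sums into the closed form $Z_H + Z_L = \sum_{i=1}^{h}\Q{i}{t} - \Q{r-1}{t}$, after which the per-step change $\delta = -\mathds{1}(x_t \le h) + \mathds{1}(x_t = r-1) \in \{-1,0\}$ is immediate. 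What your route buys is precision: your deltas are exactly right in every case, whereas the paper's positional bookkeeping contains two slips --- it claims no change when $x_t < r-1$ (in fact agent $x_t$ then lies in $S_L=\{1,\dots,h-1\}$ and $Z_L$ drops by $1$), and it claims a net drop of $1$ when $x_t = r-1$ by counting an L-zone loss for agent $h\notin S_L$ (the true net change is $0$). Both versions satisfy the needed bound $\delta \ge -1$, so the lemma is unaffected, but your algebraic identity makes the bookkeeping error-proof. Your handling of the boundary $r=1$ via the convention $\Q{0}{t}\equiv 0$ is consistent with the algorithm's behavior, and your appeal to Lemma \ref{lemma:capacity} mirrors the paper's own use of it (to rule out agents below $r-1$ sitting at the threshold capacity), so there is no circularity. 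No gaps.
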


\begin{lemma} \label{lemma:decision}
Under any problem instance $I = (m_1 \dots m_k, T_1 \dots T_k)$ with two types of task, $m_1 \leq m_2 \leq \dots \leq m_k$, let $i$ and $j$ be two tasks of the same type (i.e. $T_i = T_j$), and let $(x_1, \dots, x_n)$ be the decision sequence output by Algorithm \ref{algo:greedy} under $I$, we have $x_i \leq x_j$ if $i < j$.
\end{lemma}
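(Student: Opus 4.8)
The plan is to prove Lemma \ref{lemma:decision} by establishing a stronger invariant that couples the allocation decisions with the monotonicity of agent capacities (Lemma \ref{lemma:capacity}) and the monotonicity of look-ahead horizons (Lemma \ref{lemma:horizon}). The core difficulty is that tasks $i$ and $j$ of the same type are processed at \emph{different} times, so between time $i$ and time $j$ the capacity vector $(\Q{1}{t}, \dots, \Q{k}{t})$ changes and intervening tasks consume slots. I cannot directly compare the two \texttt{ThresholdAllocation} calls because their inputs differ. Instead, I would first reduce to the case where $i$ and $j$ are \emph{consecutive} tasks of the same type, i.e. all tasks strictly between $i$ and $j$ are of the other type; the general statement then follows by chaining the consecutive case along a maximal run of intermediate tasks (a standard transitivity argument, since $x_i \le x_{i'} \le \dots \le x_j$).

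For the consecutive case I would argue as follows. Let $x_i = \gamma$ be the agent chosen for task $i$. By the structure of \texttt{ThresholdAllocation}, $\gamma$ is the \emph{largest} agent index (scanning from $k$ down) for which some look-ahead setup $(\gamma, h)$ succeeds, i.e. for which the L-zone fills before the H-zone overflows. The goal is to show that when task $j$ is allocated, no agent index strictly smaller than $\gamma$ can succeed before $\gamma$ does; equivalently, every setup $(\gamma', h')$ with $\gamma' > \gamma$ that would be scanned for task $j$ still fails, OR $\gamma$ itself still succeeds. The key observation is that $T_i = T_j$ means both calls classify tasks into ``large'' and ``small'' using the \emph{same} threshold, so the look-ahead window counts of ``large'' tasks differ only by the removal of the tasks allocated strictly between $i$ and $j$ and by the forward shift of the window. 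Here Lemma \ref{lemma:horizon} is the crucial lever: it guarantees the horizon under a fixed setup only advances, so I can relate the ``large''-task count in the window for task $i$ to that for task $j$ by accounting precisely for which tasks leave the window and which enter.

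The main obstacle, which I expect to be where the real work lies, is controlling how the capacity changes between time $i$ and time $j$ affect the \emph{sizes} $Z_L, Z_H$ of the zones and thus the success/failure threshold of each setup. Since the intervening tasks are all of the opposite type to $T_i$, I would use Lemma \ref{lemma:capacity} to argue that these intervening allocations can only \emph{decrease} the capacity of agents in a way that makes the smaller-indexed agents relatively \emph{less} attractive (their free slots shrink or the relevant $p^*$ positions advance), so that the threshold for a setup $(\gamma', h')$ with $\gamma' > \gamma$ becomes \emph{harder} to meet, not easier. Concretely, I would track the quantity $Z_L$ (the number of ``large'' tasks needed) against the available count of large tasks in the advancing horizon window and show that any setup that failed for task $i$ at index $\gamma' > \gamma$ continues to fail for task $j$. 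This monotone-worsening claim for high-index agents, combined with the guarantee that agent $1$ always succeeds if reached, forces the chosen index for task $j$ to be at least $\gamma$, giving $x_i \le x_j$.

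The technical heart is thus a careful bookkeeping lemma comparing two horizon windows of equal threshold type, and the cleanest route is probably to phrase it as: \emph{for each fixed candidate agent index, the predicate ``this setup succeeds'' is monotone non-increasing in time along tasks of a fixed type}. Once that predicate monotonicity is in hand, the lemma follows immediately, since $x_t$ is by construction the largest index whose predicate is true.
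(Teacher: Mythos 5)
The paper states Lemma \ref{lemma:decision} without giving a proof, so there is no in-paper argument to compare yours against; judged on its own terms, your plan has a genuine directional error at its core. Algorithm \ref{algo:greedy} scans $\gamma$ from $k$ down to $1$ and returns the \emph{largest} index for which some setup $(\gamma,h)$ succeeds. Your concluding key lemma --- ``for each fixed candidate agent index, the predicate `this setup succeeds' is monotone non-increasing in time along tasks of a fixed type'' --- would therefore imply that the largest succeeding index is non-increasing in time, i.e.\ $x_j \le x_i$, which is the \emph{reverse} of the inequality to be proved. The same problem infects the intermediate steps: showing that ``any setup that failed for task $i$ at index $\gamma' > \gamma$ continues to fail for task $j$'' only bounds $x_j$ from above by $\gamma = x_i$. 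To prove $x_i \le x_j$ you must instead show that success \emph{persists or migrates upward}: at time $j$ some index $\gamma' \ge x_i$ must succeed. Your stated goal (``every setup with $\gamma' > \gamma$ still fails, \emph{or} $\gamma$ itself still succeeds'') does not exclude the dangerous case in which every index $\ge \gamma$ fails at time $j$, and nothing in the plan rules that case out.

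Moreover, the claimed predicate monotonicity is false, not merely unhelpful. Take $T_i = T_j = L$: since no task exceeds $L$, the set $\mathcal{L}$ is empty and a setup at index $\gamma$ succeeds exactly when $Z_L = 0$, i.e.\ when all agents below $\gamma$ are empty. If agent $1$ is emptied between times $i$ and $j$, the predicate at $\gamma = 2$ flips from false to true --- it \emph{increases} in time. Such upward flips are precisely what make $x_j > x_i$ possible, so a correct proof must accommodate them rather than forbid them. The salvageable parts of your plan are the reduction to consecutive same-type tasks (sound, by transitivity) and the use of Lemmas \ref{lemma:capacity} and \ref{lemma:horizon} to do the bookkeeping on $Z_L$, $Z_H$ and the window contents; but that bookkeeping must be redone in the opposite direction, establishing that if $(\gamma,h)$ succeeds for task $i$ then, after task $i$ is placed in agent $\gamma$ and the intervening opposite-type tasks are placed, some setup $(\gamma',h')$ with $\gamma' \ge \gamma$ succeeds for task $j$.
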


In particular, thanks to Lemma \ref{lemma:horizon}, all data variables (the horizons, thresholds, and counters) used by Algorithm \ref{algo:greedy} can be updated incrementally, yielding a more efficient implementation as shown by Algorithm \ref{algo:threshold} in Appendix \ref{sec_faster_algo}. It is easy to see that $O(k^2 n)$ time and $O(k^2)$ space is sufficient for Algorithm \ref{algo:threshold} to compute the same decision sequence with Algorithm \ref{algo:greedy} for any problem instance with two types of task. 

\begin{theorem} \label{thm:threshold}
Under any problem instance $I = (m_1 \dots m_k, T_1 \dots T_k)$, $T_i \in \{H,L\}$ for $1\leq i \leq n$, $0\leq H<L$, Algorithm \ref{algo:threshold} returns the same decision sequence with Algorithm \ref{algo:greedy} in $O(n \cdot k^2)$ time and with $O(k^2)$ space.
\end{theorem}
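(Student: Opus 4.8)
The plan is to establish the theorem in two parts: first that Algorithm \ref{algo:threshold} computes exactly the same decision sequence as the reference Algorithm \ref{algo:greedy}, and second that it does so within the claimed time and space bounds. The whole argument hinges on the observation that, at every task $t$, both algorithms make their decision from the very same three families of quantities --- the zone sizes $Z_L, Z_H$, the look-ahead horizon $Z_L + Z_H + t - 1$, and the count of ``large'' tasks inside the look-ahead window --- indexed by the $O(k^2)$ admissible setups $(r,h)$ with $1 \le r \le h \le k$. Algorithm \ref{algo:greedy} recomputes these from scratch at each task, whereas Algorithm \ref{algo:threshold} merely maintains them incrementally. So the heart of the correctness claim is an invariant: after processing task $t-1$, the stored values of $Z_L^{(r,h)}, Z_H^{(r,h)}$ and of the window-counter $c^{(r,h)}$ coincide with what \texttt{ThresholdAllocation} would freshly compute if invoked on $(\Q{1}{t} \ldots \Q{k}{t}, T_t \ldots T_n)$.

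First I would set up this invariant and prove it by induction on $t$. The base case ($t=1$) is the explicit initialization. For the inductive step I must account for the two events that alter the stored quantities between consecutive tasks: (i) the left endpoint of each look-ahead window advances from $t$ to $t+1$, which removes task $t$ and decrements $c^{(r,h)}$ only if $T_t$ was ``large''; and (ii) exactly one capacity is decremented, $\Q{x_t}{t+1} = \Q{x_t}{t} - 1$, which perturbs $Z_L^{(r,h)}$ and $Z_H^{(r,h)}$ for those setups in which agent $x_t$ lies in $S_L$, lies in $S_H$, or serves as the pivot $\gamma-1$, and in turn moves the right endpoints (horizons) forward. I would verify that each such perturbation is an $O(1)$ additive correction to the stored $Z_L, Z_H$, and that the induced forward motion of each horizon is absorbed by inserting the newly covered tasks into $c^{(r,h)}$. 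Once the invariant holds, the decision test ``$|\mathcal{L}| \ge Z_L$'' is evaluated on identical data in both algorithms, so the returned agent $x_t$ --- and hence the entire decision sequence --- agrees, which closes the equivalence.

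The complexity argument rests chiefly on Lemmas \ref{lemma:capacity} and \ref{lemma:horizon}, with Lemma \ref{lemma:decision} keeping the per-task scan over candidate agents consistent. Lemma \ref{lemma:capacity} guarantees the capacities stay sorted, so the pivot $m_{\gamma-1}$ and the clamp terms $\min\{m_i, m_{\gamma-1}\}$ behave consistently and each capacity decrement touches a well-defined, easily enumerated set of setups. Lemma \ref{lemma:horizon} is the crucial one: for every fixed setup $(r,h)$ the horizon is non-decreasing in $t$, so the right endpoint of its window only ever moves forward; over the full run it advances a total of $O(n)$ steps, and each task is inserted into that window at most once. Summing insertions and deletions over all $O(k^2)$ setups yields $O(k^2 n)$ window-maintenance work, while the $n$ capacity decrements contribute $O(1)$ corrections to each of $O(k^2)$ setups, again $O(k^2 n)$; since each task's decision is read off in $O(k^2)$ time, the total is $O(k^2 n)$. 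The space bound is immediate: we store only $O(1)$ scalars (horizon position, counter, $Z_L$, $Z_H$) per setup, hence $O(k^2)$ overall.

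The step I expect to be the main obstacle is part (ii) of the inductive correctness argument --- reconciling the incremental bookkeeping with the genuinely time-varying, and nonlinear, definitions of $Z_L$ and $Z_H$. Because $Z_L = \sum_{i \in S_L} \min\{m_i, m_{\gamma-1}\}$ clamps each capacity against the moving pivot $m_{\gamma-1} = \Q{\gamma-1}{t}$, a single decrement can change $Z_L$ through either an $m_i$ term or the pivot itself, and I must check that the stored value is still correctable in $O(1)$ (appealing to Lemma \ref{lemma:capacity} to control which terms are actually clamped) and that the resulting horizon shift never moves backward (appealing to Lemma \ref{lemma:horizon}), so that no already-counted task is ever re-inserted or spuriously removed. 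Carefully matching Algorithm \ref{algo:threshold}'s update order to these two lemmas is what makes the amortized $O(1)$-per-update claim --- and thus the overall $O(k^2 n)$ bound --- go through.
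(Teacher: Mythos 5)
Your proposal is correct and follows essentially the same route the paper (implicitly) takes: the paper offers no written proof of Theorem~\ref{thm:threshold} beyond asserting it is ``easy to see,'' pointing to Lemma~\ref{lemma:horizon} as the reason the horizons, thresholds, and counters can be maintained incrementally, and exhibiting the explicit $O(1)$-per-setup update rules in Algorithm~\ref{algo:threshold}. Your invariant-plus-induction framing, with Lemma~\ref{lemma:capacity} controlling the clamped terms in $Z_L$ and Lemma~\ref{lemma:horizon} guaranteeing windows only extend forward, is precisely the intended justification, spelled out in more detail than the paper itself provides.
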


\section{Optimality of Algorithm \ref{algo:greedy}} \label{sec:optimality}

In this section we prove that Algorithm \ref{algo:greedy} minimizes the total cost of any problem instance of positional allocation with two types of task. The basic idea is to show that if at any time we don't assign a task $t$ to the agent decided by Algorithm \ref{algo:greedy}, from the resulting decision sequence $\x$ we can always construct another sequence $\x^*$ such that $\x^*$ follows Algorithm \ref{algo:greedy} on $t$ and has less total cost than $\x$. The construction is by re-arranging some tasks in the corresponding allocation scheme of $\x$ without violating the monotonicity property of Definition \ref{def:allocation}.

Since Algorithm \ref{algo:greedy} runs in an iterative manner, we only prove this for the first task $T_1$. In order to be consistent with the input of Function ThresholdAllocation of Algorithm \ref{algo:greedy}, we slightly generalize the problem formulation to allow variable-sized agents in this section as follows: 

\noindent \textbf{Variable-Sized Position Allocation (VSPA) Problem.} Given a problem instance $\mathbf{I}=(m_1,\dots,m_k,T_1,\dots, T_n)$, where $T_i$ is the intrinsic-cost of task $i$ and $m_r$ is the cardinality capacity of agent $r$, without loss of generality assume all agents are ``non-empty" at the beginning, and is sorted increasingly in capacity, that is, $0 < m_1 \leq m_2 \dots \leq m_k$. A valid allocation scheme $\mathbf{A} = (\aaA{1},\dots, \aaA{k})$ is required to be consistent with the cardinality constriants, that is, $|\aaA{r}|=m_r$ for each $r \in [k]$. The allocation scheme is indexed by the capacity of the agent when the task is assigned to (i.e. the reversed position). Instead of directly applying the positional function $f(p)$, we define 
\begin{equation} \label{eq:capacity_function}
g(q) = f(m+1-q)
.\end{equation}
Assigning task $i$ to agent $r$ of capacity $m_r$ causes a cost of $g(m_r) \cdot T_i$. Again, every decision sequence $\x$ corresponds to a unique partitioning of the task sequence $\{T_{r,q}\}$, and the goal is to minimize the total cost defined by  
\begin{equation} \label{eq:total_cost_2}
COST_{\mathbf{I}}(\x) = \sum_{r=1}^k \sum_{q=1}^{m_r} g(q) \cdot T_{r,q}  
.\end{equation}
One can verify that the VSPA problem is exactly the original positional allocation problem when $m_1=m_2=\dots=m_k=m$.

Now we will prove the optimality of Algorithm \ref{algo:greedy} for VSPA instances with two types of task. The main part of the proof is separated in cases of $T_1 = L$ (Lemma \ref{lemma:low_greedy}) and $T_1=H$ (Lemma \ref{lemma:high_greedy}). In the former case, a straightforward re-arrangment can be done by observing the fact that Algorithm \ref{algo:greedy} always assigns a L task to the ``smallest" agent (i.e., in agent $1$). See the proof in Appendix \ref{proof_low_greedy}. 

\begin{lemma} \label{lemma:low_greedy}
If $I=(m_1 \dots m_k, T_1 \dots T_n)$ is an instance of the VSPA problem with $T_i \in \{H,L\}$, $0\leq H<L$, $0 < m_1 \leq m_2 \leq \dots \leq m_k$, and $T_1 = L$, then for any decision sequence $\x = (x_1,x_2,\dots,x_n)$ with $x_1 > 1$, there exists another decision sequence  $\x^*=(1, x^*_2,\dots,x^*_{n})$ such that $COST_I(\mathbf{x^*}) \leq COST_I(\mathbf{x})$.
\end{lemma}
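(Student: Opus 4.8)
The plan is to exploit two facts: that $T_1=L$ is the maximum task value, and that task $1$, being the first task, must occupy the front (full-capacity) slot of whichever agent receives it. Thus if $x_1=s$ then task $1$ sits at capacity $m_s$ and contributes $g(m_s)\,L$, whereas placing it in agent $1$ would cost only $g(m_1)\,L\le g(m_s)\,L$, because $g(q)=f(m+1-q)$ is increasing (as $f$ is decreasing) and $m_1\le m_s$. Since only agents $1$ and $s$ are affected, I would first observe that any re-partition of the combined task set of these two agents into parts of sizes $m_1$ and $m_s$ yields a valid allocation scheme: listing each part in increasing index order automatically satisfies the monotonicity clause of Definition~\ref{def:allocation}, so (via Lemma~\ref{lemma:allocation}) it corresponds to a legal decision sequence. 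Hence it suffices to show that the minimum total cost over all such re-partitions is attained by some partition placing task $1$ in agent $1$; that minimizer is then the desired $\x^*$, and $COST_I(\x^*)\le COST_I(\x)$ follows since the given $\x$ is itself one such re-partition.

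The engine is Lemma~\ref{lemma:exchange} applied to a two-slot swap. Writing $b_1$ for agent $1$'s first (smallest-index) task and $c_2$ for agent $s$'s second task, consider exchanging the task at (agent $1$, capacity $m_1$, value $T_{b_1}$) with task $1$ at (agent $s$, capacity $m_s$, value $L$). Because $1<b_2$ and, when $b_1<c_2$, also $b_1<c_3<\cdots$, this swap keeps both agents monotone and leaves every other slot untouched; its cost change is
\[
(g(m_1)-g(m_s))\,(L-T_{b_1})\le 0,
\]
since $g(m_1)\le g(m_s)$ and $L\ge T_{b_1}$ --- exactly the statement that the larger value belongs at the smaller of the two front capacities. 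When $m_1=m_s$ the same idea degenerates to simply relabelling agents $1$ and $s$, which moves task $1$ into agent $1$ at no change in cost. These observations dispatch the generic configuration $b_1<c_2$ together with the equal-capacity case.

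The hard part is the configuration $b_1>c_2$, i.e. agent $s$ holds an initial run of small-index tasks $1,2,\dots,b_1-1$ before agent $1$ begins; there the direct swap is illegal (putting $b_1$ at agent $s$'s front violates monotonicity), and one can exhibit value patterns where the naive re-insertion of $b_1$ strictly increases cost, so no single fixed re-partition works universally. I expect this to be the main obstacle. My plan is to reach a good partition through a finite sequence of \emph{valid, cost-non-increasing} exchanges: repeatedly apply Lemma~\ref{lemma:exchange} to adjacent slots so as to either decrease agent $1$'s leading index toward $1$ or advance the maximal value $L$, maintaining monotonicity at each step, until the clean configuration $b_1<c_2$ is reached and task $1$ can be slotted into agent $1$. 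The crux is to define a potential (e.g. the leading index of agent $1$, or the capacity occupied by task $1$) that strictly decreases under such exchanges while the cost never increases, so that the process terminates at a partition with $x_1=1$; establishing that an appropriate cost-non-increasing legal exchange is always available in the $b_1>c_2$ case --- using that $L$ is the maximum value and therefore belongs at the smallest available front capacity --- is where the real work lies.
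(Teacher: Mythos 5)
Your opening moves match the paper's: relabel agents when $m_1=m_{x_1}$, and otherwise trade task $1$ (value $L$, sitting at capacity $m_{x_1}$) against agent $1$'s front task $b_1$ (at capacity $m_1$), with cost change $(g(m_1)-g(m_{x_1}))(L-T_{b_1})\le 0$. That correctly disposes of the sub-case $b_1<c_2$. But you have explicitly left the complementary sub-case as a plan rather than a proof, and that sub-case is where essentially all of the content of the lemma lives; as written the argument is incomplete. Moreover the route you sketch --- a terminating sequence of \emph{individually} cost-non-increasing, monotonicity-preserving local exchanges driven by a potential function --- is not obviously available: once $b_1$ has to be threaded into the interior of agent $x_1$'s list, the intermediate single-slot exchanges involve pairs of arbitrary task values at capacities whose relative order is not controlled, so there is no guarantee that each link of the chain is separately non-increasing.

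The paper closes this gap with one global rearrangement rather than a chain of local ones. Write $\gamma=x_1$. If agent $\gamma$ is consumed down to capacity $m_1$ before agent $1$ receives its first task, swap the bottom $m_1$ capacities of agent $\gamma$ wholesale with the $m_1$ tasks of agent $1$; this preserves monotonicity, changes the cost not at all (the same multiset of values occupies the weights $g(1),\dots,g(m_1)$ either way), and reduces to the remaining case. In that remaining case, move task $1$ to agent $1$'s front slot and \emph{insert} $b_1$ into agent $\gamma$'s list at the unique capacity $s$ that preserves monotonicity, shifting each task of agent $\gamma$ between capacities $s$ and $m_\gamma-1$ up by one. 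The resulting cost change is
\begin{equation*}
L\bigl(g(m_1)-g(m_{\gamma})\bigr)+T_{b_1}\bigl(g(s)-g(m_1)\bigr)+\sum_{l=s}^{m_\gamma-1}T_{\gamma,l}\bigl(g(l+1)-g(l)\bigr),
\end{equation*}
which is \emph{not} term-by-term nonpositive; it becomes nonpositive only after bounding every task value by $L$, whereupon the whole expression telescopes to $0$. This aggregate, telescoping use of $T_i\le L$ is the idea missing from your proposal, and it is exactly what replaces the unproven ``an appropriate cost-non-increasing legal exchange is always available'' step.
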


The proof for cases with $T_1 = H$ (Lemma \ref{lemma:high_greedy}) is by induction. Specifically, assume by induction that Algorithm \ref{algo:greedy} is optimal for the subsequent tasks $T_2, \dots, T_n$, which yields a specific sequence $\x$ that is guaranteed to be optimal among the set of decision sequences \emph{not} following Algorithm \ref{algo:greedy} at $T_1$. We will show that the specific pattern of $\x$ always enables a re-arrangement to beat itself, and thus beat any sequence not following Algorithm \ref{algo:greedy} at $T_1$. More specifically, suppose Algorithm \ref{algo:greedy} assigns $T_1$ to agent $\gamma^*$, there can be two ways to \emph{not follow} this decision: i) to allocate $T_1$ ``lower", in some agent $r < \gamma^*$; or ii) to allocate $T_1$ ``higher", in some agent $r > \gamma^*$. The constructions of better decision sequences are further separated into these two cases. See the proof in Appendix \ref{proof_high_greedy}.

\begin{lemma} \label{lemma:high_greedy}
Suppose $I=(m_1 \dots m_k, T_1 \dots T_n)$ is an problem instance with $T_i \in \{H,L\}$, $0\leq H<L$, $0 < m_1 \leq m_2 \leq \dots \leq m_k$, and $T_1 = H$, and suppose Algorithm \ref{algo:greedy} assigns task $1$ to agent $\gamma^*$ under $I$, then for any decision sequence $\x = (x_1,x_2,\dots,x_n)$ with $x_1 \neq \gamma^*$, there exists another decision sequence $\x^*=(\gamma^*, x^*_2 \dots,x^*_{n})$ such that $COST_I(\mathbf{x^*}) \leq COST_I(\mathbf{x})$.
\end{lemma}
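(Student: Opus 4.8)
The plan is to run a single induction on the number of tasks $n$ (inside the overall induction that establishes optimality of Algorithm~\ref{algo:greedy}), working throughout in the VSPA formulation with the increasing weight $g(q)=f(m+1-q)$ of Eq.~\ref{eq:capacity_function}. As the inductive hypothesis I assume Algorithm~\ref{algo:greedy} is optimal on every instance with fewer than $n$ tasks (the base case $n=1$ being immediate, since the sole task is forced); in particular, once $x_1$ is fixed, the greedy continuation on the residual $(n-1)$-task instance is optimal. It then suffices to prove the statement for the single cost-minimal decision sequence $\x$ among all those with $x_1\neq\gamma^*$: exhibiting one $\x^*$ with $x^*_1=\gamma^*$ and $COST_I(\x^*)\le COST_I(\x)$ for this particular $\x$ automatically witnesses the claim for every competitor, since $COST_I(\x^*)\le COST_I(\x)\le COST_I(\x')$ for any other $\x'$ with $x'_1\neq\gamma^*$. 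By the inductive hypothesis this minimal $\x$ places $T_1=H$ at the top of some agent $c\neq\gamma^*$ and follows Algorithm~\ref{algo:greedy} thereafter, so (after collapsing the degenerate case $m_c=m_{c-1}$ to an equivalent non-skipped representative) its allocation scheme carries the monotone ``staircase'' structure supplied by Lemmas~\ref{lemma:capacity}, \ref{lemma:horizon} and~\ref{lemma:decision}. I would then split into $c<\gamma^*$ (task~$1$ placed \emph{lower} than greedy) and $c>\gamma^*$ (placed \emph{higher}).

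Both cases are exchange arguments powered by the Rearrangement Inequality (Lemma~\ref{lemma:exchange}) applied to the product cost $g(q)\cdot T$, whose positive mixed derivative states that a small task $H$ prefers a large weight and a large task $L$ prefers a small weight. When $c<\gamma^*$, task~$1$ sits at weight $g(m_c)\le g(m_{\gamma^*})$, so I would promote $T_1=H$ to the top of agent $\gamma^*$ and push the displaced task back toward $c$; the decisive ingredient is that the threshold test \emph{succeeds} at $\gamma^*$ ($|\mathcal{L}|\ge Z_L$ on the witnessing horizon), which guarantees an $L$-task is available inside the look-ahead window to occupy the freed low-weight slot, making each elementary swap change the cost by $(H-L)\bigl(g(m_{\gamma^*})-g(m_c)\bigr)\le 0$. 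When $c>\gamma^*$, I would use the dual fact that the threshold test \emph{fails} at every non-skipped $\gamma>\gamma^*$ (precisely why Algorithm~\ref{algo:greedy} did not stop earlier): a failure $|\mathcal{L}|<Z_L$ forces more than $Z_H$ of the look-ahead tasks to be of type $H$, so a later $H$-task can take over the high-weight top slot of $c$ while $T_1$ descends to $\gamma^*$ and the intervening $L$-tasks slide to strictly lower weights, each slide again being cost-nonincreasing by Lemma~\ref{lemma:exchange}.

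The main obstacle in both cases is not the cost inequality but preserving validity, i.e.\ the monotonicity condition~(3) of Definition~\ref{def:allocation}: a single swap of $T_1$ with the top task of the target agent generally breaks monotonicity in the receiving agent, because the displaced task can have a larger index than tasks already sitting below it. I expect the correct construction to be a \emph{chain} of exchanges that transports $T_1$ between $c$ and $\gamma^*$ one capacity level at a time, each elementary exchange being between tasks that are either of the same type or strictly index-ordered, so that condition~(3) is maintained at every step and Lemma~\ref{lemma:allocation} then converts the final valid scheme into the desired $\x^*$. Showing that such a valid chain always exists --- and terminates at an exchange partner of the correct type \emph{within} the look-ahead window --- is where the structural lemmas do the real work: the capacity and decision monotonicities (Lemmas~\ref{lemma:capacity} and~\ref{lemma:decision}) constrain the staircase so the chain stays legal, while the horizon monotonicity (Lemma~\ref{lemma:horizon}) together with the comparison of $|\mathcal{L}|$ against $Z_L$ pins down where the needed $H$- or $L$-task appears. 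Summing the per-swap applications of Lemma~\ref{lemma:exchange} over the chain then yields $COST_I(\x^*)\le COST_I(\x)$ with $x^*_1=\gamma^*$, closing the inductive step.
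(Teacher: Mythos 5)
Your plan is essentially the paper's own argument: the same induction on instance size with the tail of $\x$ assumed to follow Algorithm~\ref{algo:greedy}, the same split into $x_1<\gamma^*$ versus $x_1>\gamma^*$, the same use of the threshold test's success (resp.\ failure) to locate an overflowing $L$ (resp.\ $H$) inside the look-ahead window as the exchange partner, and the same care about preserving monotonicity via a chain of index-ordered swaps justified by Lemma~\ref{lemma:exchange}. The paper's appendix likewise only sketches the validity of the chain, so your proposal matches it in both structure and level of detail.
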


Finally, Theorem \ref{thm:optimality} combines results proved in Lemma \ref{lemma:low_greedy} and Lemma \ref{lemma:high_greedy} to complete the proof of the instance-optimality of Algorithm under binary instances consisting of $H$ and $L$.

\begin{theorem} \label{thm:optimality}
Algorithm \ref{algo:greedy} minimizes the total cost defined by Eq. \ref{eq:total_cost} for any instance of the positional allocation problem with two types of task.
\end{theorem}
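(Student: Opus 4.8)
The plan is to reduce the theorem to an induction on the number of tasks $n$, with Lemma \ref{lemma:low_greedy} and Lemma \ref{lemma:high_greedy} supplying the inductive step. Since Eq. \ref{eq:total_cost} is exactly the VSPA cost $COST_{\mathbf{I}}$ of Eq. \ref{eq:total_cost_2} in the special case $m_1 = \dots = m_k = m$ (with $g(q) = f(m+1-q)$), it suffices to prove the slightly stronger statement that, for \emph{every} VSPA instance $I$ with two task types, Algorithm \ref{algo:greedy} outputs a decision sequence of minimum total cost. I would prove this by induction on $n$, the base case $n = 0$ being vacuous.

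For the inductive step, fix $I = (m_1 \dots m_k, T_1 \dots T_n)$ and let $a$ be the agent to which Algorithm \ref{algo:greedy} assigns $T_1$ (so $a = 1$ when $T_1 = L$, and $a = \gamma^*$ when $T_1 = H$). First I would use the two lemmas to argue that some optimal solution already agrees with the algorithm on $T_1$: if $T_1 = L$, Lemma \ref{lemma:low_greedy} dominates any sequence with $x_1 \neq 1$ by one starting at agent $1$; if $T_1 = H$, Lemma \ref{lemma:high_greedy} dominates any sequence with $x_1 \neq \gamma^*$ by one starting at $\gamma^*$. Either way $\min_{\x} COST_I(\x) = \min_{\x : x_1 = a} COST_I(\x)$, so it remains only to show that the algorithm is optimal among sequences that place $T_1$ in $a$.

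Next I would exploit the recursive structure of Algorithm \ref{algo:greedy}: once $x_1 = a$ is committed, the outer loop simply re-invokes \texttt{ThresholdAllocation} on the residual instance $I' = (m_1, \dots, m_a - 1, \dots, m_k, T_2 \dots T_n)$, so the tail $(x_2, \dots, x_n)$ that the algorithm produces on $I$ is identical to the sequence it produces when run on $I'$. Because no cost term for $T_2, \dots, T_n$ depends on where $T_1$ is placed, the cost splits as $COST_I(\x) = g(m_a)\,T_1 + COST_{I'}(x_2, \dots, x_n)$ for every $\x$ with $x_1 = a$. By Lemma \ref{lemma:capacity} the decremented capacity vector of $I'$ is still sorted, so $I'$ is a legitimate two-type VSPA instance with $n-1$ tasks, and the induction hypothesis yields that the algorithm minimizes $COST_{I'}$. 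Combining this with the previous paragraph gives that the algorithm attains $\min_{\x : x_1 = a} COST_I(\x) = \min_{\x} COST_I(\x)$, closing the induction and hence the theorem.

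Since Lemma \ref{lemma:low_greedy} and Lemma \ref{lemma:high_greedy} carry all the combinatorial weight, the only real obstacle at the level of the theorem is the bookkeeping needed to legitimately ``peel off'' the first task: I must verify that $I'$ is itself a valid two-type VSPA instance — invoking Lemma \ref{lemma:capacity} to preserve the sorted-capacity precondition and, if $m_a = 1$, dropping the now-empty agent — and that the algorithm's tail on $I$ coincides verbatim with its full run on $I'$, so that the inductive hypothesis applies unchanged. I expect the re-sorting (and the vanishing of an agent) after decrementing $m_a$, together with confirming that it does not perturb the identity of the algorithm's subsequent decisions, to be the subtlest point to pin down precisely.
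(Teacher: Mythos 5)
Your proposal is correct and follows essentially the same route as the paper: the paper likewise reduces the theorem to the VSPA formulation, handles the first task via Lemma \ref{lemma:low_greedy} and Lemma \ref{lemma:high_greedy}, and peels off tasks one at a time by induction on the instance size. The only nuance worth noting is that the paper's proof of Lemma \ref{lemma:high_greedy} itself assumes optimality of the algorithm on smaller instances, so the theorem and that lemma are really established by a single simultaneous induction --- which your size-$n$ induction hypothesis supplies at exactly the right moment, so your structure is consistent with the paper's.
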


\bibliographystyle{plain}
\bibliography{positional}

\newpage
\begin{appendix}

\section{The Multi-Battery Problem} \label{sec_battery}
Energy-efficiency is a key concern on mobile devices that depend on batteries to provide the power required to maintain operation. Whenever a device draws power, not all energy that is drawn from the battery is actually useful in the sense that it ends up powering the device. In fact, only a fraction of the energy drained from the battery ends up powering the device, the remainder is wasted: it heats up the device. We call these two components of the energy drained from a battery with each load the \emph{useful energy} and the \emph{wasted energy}, respectively. There are two main factors that determine the amount of wasted energy in the battery in a given time duration: the power consumption of the current user load and the \emph{internal resistance} of the battery. In turn, one of the key factors determining internal resistance is the \emph{State-of-Charge} (SoC) of the battery, i.e., how much remaining charge is in the battery. The quantatitive relationship between the wasted energy $E_{waste}$, the power of load $p$, and the SoC of battery $s$ can be approximated by the following formula:
\begin{equation} \begin{aligned}
E_{waste} &= T \cdot \left( 1-\sqrt{1-\frac{2r}{V_{oc}} \cdot p} ~\right) / \frac{2r}{V_{oc}} \\
r &= R_0 + \alpha \cdot (1-s) 
,\end{aligned} \end{equation}
where $V_{oc}$ as the open-circuit voltage, $R_0$ as the initial resistance of the battery, $\alpha$ as the DCIR-SoC coefficient, and $T$ as the time length of the load -- all can be considered to be constants. One can verify that $E_{waste}$ is increasingly monotone to $p$ and decreasingly monotone to $s$. Also, the function $E_{waste}(p,s)$ always has positive mixed partial derivatives. By Lemma \ref{lemma:exchange} we know that, in general, we should try to power low-power loads by batteries in relatively low state-of-charge and power high-power loads by batteries in relatively high state-of-charge, so as to minimize the total wasted energy during a battery discharging cycle.
 \begin{figure*} [h!]
	\begin{centering}
	\includegraphics[width=1.0 \textwidth]{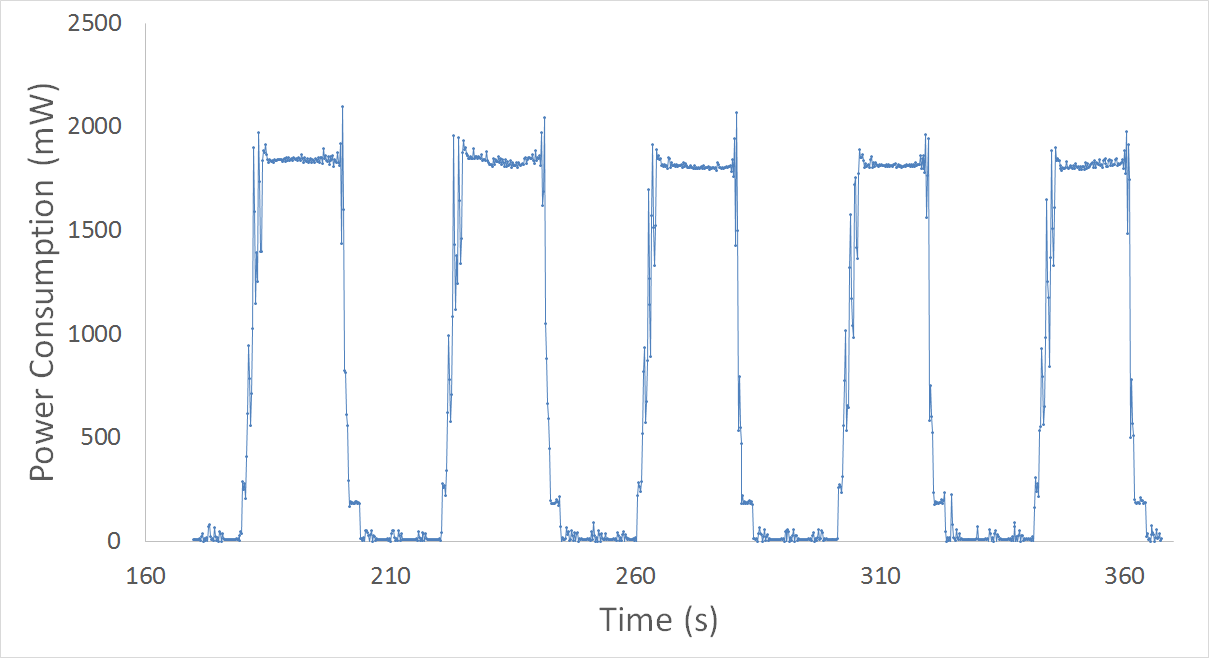}
	\caption{Power consumption curve of smart phones. The average power under active mode is about $1800$ mW, while the number under standby mode is less than $16$ mW -- a gap of more than 100x.}
	\label{fig:power_gap}
	\end{centering}
\end{figure*}

\section{Example Instance Showing the Naive Greedy Strategy is Suboptimal} \label{sec_counter_example}

The naive greedy strategy for the positional allocation problem is to allocate task with relatively larger (smaller) cost to a position with relatively smaller (larger) weight. Figure \ref{fig:counter_example} shows a simple problem setting asking to assign a sequence of $80$ tasks to $4$ agents. There are only two task-types $H$ and $L$ in the task sequence, and $H<L$. The leftside shows the allocation scheme of the naive greedy strategy under this instance. The rightside shows the allocation scheme of the algorithm proposed in this paper, in which some L tasks are exchanged with some H tasks in lower positions, resulting a lower total cost due to the rearrangement inequality.

\begin{figure*} [h!]
	\begin{centering}
	\includegraphics[width=1.0 \textwidth]{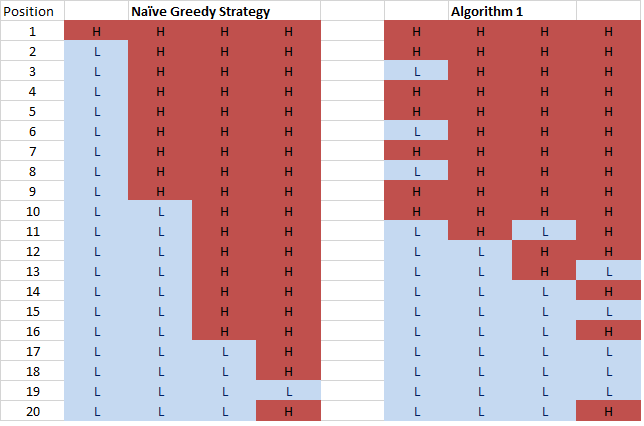}
	\caption{Diagram illustration of the allocation scheme of the naive greedy algorithm and Algorithm \ref{algo:greedy} under the instance HHLHH LHLHH LLLHH HHLLH HHLLH HLLHH HHLLL LHHLH HHLHL LHHLL LHHHL HHHHH LHHLH HLLHL LHLHL HLLLH}
	\label{fig:counter_example}
	\end{centering}
\end{figure*}

\section{Proofs}

\subsection{Proof of Lemma~\ref{lemma:allocation}} \label{proof:allocation}
\begin{proof}
It is obvious that we can get unique allocation scheme from a decision sequence. We prove the opposite direction by giving an algorithm (Algorithm \ref{algo:decoding}) that ``decodes" how the tasks are assigned by the input allocation scheme.
\begin{algorithm} [h!] \label{algo:decoding}
\KwIn{$\aaA{1} \dots \aaA{k}$}
\KwOut{$\x$}
set $p_r \leftarrow 1$ for $1\leq r \leq k$\;
$t \gets 1$\;
\While{$\exists r \in [k]$,  $p_r \leq |\aaA{r}|$}
{
	\uIf{$\exists r \in [k]$, $\aaA{r} = t$}
	{
		$x_t \gets r$; $p_r \gets p_r + 1$; $t \gets t + 1$ \;
	} \Else
	{\Return ``This is not a valid allocation scheme."\; \label{algo:line}}		
}
\Return $\x$ \;
\caption{Algorithm that decodes the decision sequence from the input allocation scheme}
\end{algorithm}

For contradiction, suppose Algorithm \ref{algo:decoding} fails to output a decision sequence (i.e. goes into Line \ref{algo:line}) at time $t$ when the input $\mathbf{A} = (\aaA{1}, \dots, \aaA{k} )$ satisfies property (1)(2)(3) in Definition \ref{def:allocation}. Since every task before $t$ has been decoded, we know $p_r>t$ for all $r \in [k]$. Due to the monotonicity of allocation schemes we know $t \not\in \aaA{r}$ for any $r \in [k]$, which violates the completeness of allocation schemes. 
\end{proof}

\subsection{Proof of Lemma~\ref{lemma:capacity}} \label{proof_capacity}
\begin{proof}
The capacity of agent decreases $1$ for each task assignment. Due to line $9 \sim 10$, Algorithm \ref{algo:greedy} never choose an agent $i$ that has the same capacity with another agent $j$ but has larger index of agent (i.e., $m_i=m_j$ and $i>j$), so the order of any agent pairs with respect to their capacities cannot be reversed.
\end{proof}

\subsection{Proof of Lemma~\ref{lemma:horizon}} \label{proof_low_greedy}
\begin{proof}
For a given look-ahead setup $(r,h)$, we will show that $Z^{(r,h,t)}_H + Z^{(r,h,t)}_L$ decreases at most $1$ for any single task assignment, which directly yields 
\begin{displaymath}
(Z^{(r,h,i)}_H + Z^{(r,h,t)}_L) - (Z^{(r,h,j)}_H + Z^{(r,h,j)}_L) < j-i \text{~~for any~} i<j
.\end{displaymath}
To prove $(Z^{(r,h,t)}_H + Z^{(r,h,t)}_L) - (Z^{(r,h,t+1)}_H + Z^{(r,h,t+1)}_L) \leq 1$ for any task $t$, let $p^*$ be the current position of agent $r-1$ at time $t$ (i.e. $p^*=m+1-\Q{r-1}{t}$), and let $r^*$ be any assignment decision of task $t$ made by Algorithm \ref{algo:greedy} (i.e. $r^* = x_t$). It is obvious that $Z_H+Z_L$ will not change if $r^* > h$ or $r^* < r-1$, since the task $t$ is assigned to an agent not covered by either H-zone or L-zone (i.e. $r^* \not\in S_H$ and $r^* \not\in S_L$). In cases of $r \leq r^* leq h$, the size of L-zone $Z_L$ doesn't change because the ``threshold position" $p^*$ doesn't change; and the size of H-zone $Z_H$ decrease $1$ due to the assignment of task $t$ to agent $r^* \in S_H$. 

In cases of $r^* = r-1$, the ``threshold position" $p^*$ decreases $1$, and the H-zone and L-zone will ``get" and ``lose", respectively, exactly one position for each agent $\gamma \in \{r\dots h\}$. In addition, the L-zone will lose one position for agent $r-1$ (i.e. the threshold position $p^*$). Due to Lemma \ref{lemma:capacity}, we have $m_{r-2} < m_{r-1}$ if Algorithm \ref{algo:greedy} chooses $r-1$, which means none of agents $\gamma < r-1$ has a position of $p^*$ at this time, so the L-zone will not further lose positions for them despite of the decreasing of the threshold position $p^*$. In total, the sum of $Z_H$ and $Z_L$ decrease $1$ in this case.
\end{proof}

\subsection{Proof of Lemma~\ref{lemma:low_greedy}} \label{proof_low_greedy}
\begin{proof}
For the decision sequence $\mathbf{x} = (x_1,\dots,x_n)$, let $x_1 = \gamma$. By definition $\gamma > 1$. Let $A = (\aaA{1}, \aaA{2}, \dots, \aaA{k})$ be the allocation scheme corresponding to $\mathbf{x}$.

Now consider $\aaA{1}$ and $\aaA{\gamma}$ in the allocation scheme. If $m_{1} = m_{\gamma}$, we can simply construct $\mathbf{x}^*$ by switching the tasks in the two agents $1$ and $\gamma$. Formally, this means we construct $A^* = (\aaA{\gamma}, \aaA{2} \dots \aaA{\gamma-1}, \aaA{1}, \aaA{\gamma+1} \dots \aaA{k})$, and by Lemma \ref{lemma:allocation} we can in turn construct $\mathbf{x}^*$ from $A^*$. So, in the following we assume $m_1 < m_{\gamma}$. Because task $1$ is assigned to agent $\gamma$, which is of capacity $\gamma$, we have
\begin{eqnarray*}
&&\aaA{1} = (\aaa{1}{1} \dots \aaa{1}{m_1-1}, \aaa{1}{m_1})^\T \\
&&\aaA{\gamma} = (\aaa{\gamma}{1} \dots \aaa{\gamma}{m_1-1}, \aaa{\gamma}{m_1}, \aaa{\gamma}{m_1+1} \dots \aaa{\gamma}{m_\gamma-1}, 1)^\T
.\end{eqnarray*}
Since by definition $\aaa{1}{m_1} \neq \aaa{\gamma}{m_1}$, there can be two cases:
 
\textbf{Case 1: } When $\aaa{1}{m_1} < \aaa{\gamma}{m_1}$. Intuitively this means the agent $1$ receives its first task ($\aaa{1}{m_1}$) before the capacity of agent $\gamma$ goes down to below $m_1$. In that case, we exchange the allocation target for task $1$ and $\aaa{1}{m_1}$ (i.e. assign task $1$ to agent $1$ and assign task $\aaa{1}{m_1}$ to agent $\gamma$). Due to Lemma \ref{lemma:exchange}, this is equivalent to a chain of exchanges, each either reduces the total cost or keeps it the same. Formally, assume $\aaa{\gamma}{s} < \aaa{1}{m_1} < \aaa{\gamma}{s+1}$ for some $m_1 < s \leq m_{\gamma}$ (such an $s$ always exists), we construct 
\begin{eqnarray*}
&&\aaA{*1} = (\aaa{1}{1} \dots \aaa{1}{m_1-1}, 1)^\T \\
&&\aaA{*\gamma} = (\aaa{\gamma}{1} \dots \aaa{\gamma}{m_1-1}, \aaa{\gamma}{m_1},\aaa{\gamma}{m_1+1} \dots \aaa{\gamma}{s-1}, \aaa{1}{m_1}, \aaa{\gamma}{s} \dots \aaa{\gamma}{m_\gamma-1})^\T
,\end{eqnarray*}
and $~~~~~~~~~A^* = (\aaA{*1}, \aaA{2} \dots \aaA{\gamma-1}, \aaA{*\gamma}, \aaA{\gamma+1} \dots \aaA{k})$. 

\noindent Recall that $A = (\aaA{1}, \aaA{2} \dots \aaA{\gamma-1}, \aaA{\gamma}, \aaA{\gamma+1} \dots \aaA{k})$, then we have
\begin{eqnarray*}
&& COST_{\mathbf{I}}(\mathbf{x}^*) - COST_{\mathbf{I}}(\mathbf{x}) \\
&=& \Big( g(m_1)L + g(s)T_{1,m_1} + \sum_{l=s+1}^{m_{\gamma}} g(l)T_{\gamma,l-1} \Big)
- \Big( g(m_1)T_{1,m_1} + \sum_{l=s}^{m_{\gamma}-1} g(l)T_{\gamma,l} + g(m_{\gamma})L \Big) \\
&=& L \Big( g(m_1)-g(m_{\gamma}) \Big) + T_{1,m_1} \Big( g(s)-g(m_1) \Big) + \sum_{l=s}^{m_\gamma-1} T_{\gamma,l} \Big( g(l+1) - g(l) \Big) \\
&\leq& L \Big( g(m_1)-g(m_{\gamma}) \Big) + L \Big( g(s)-g(m_1) \Big) + \sum_{l=s}^{m_\gamma-1} L \Big( g(l+1) - g(l) \Big) \\
&=& 0 
.\end{eqnarray*}

\textbf{Case 2: } When $\aaa{\gamma}{m_1} < \aaa{1}{m_1}$. This means the capacity of agent $\gamma$ goes down to below $m_1$ before the agent $1$ is ever assigned any task (thus still having a capacity of $m_{\gamma}$ by then). In this case we simply exchange all tasks of agents $1$ with the tasks assigned to agent $\gamma$ when the capacity of $\gamma$ is no more than $m_1$, yielding
\begin{eqnarray*}
\aaA{'1} &=& (\aaa{\gamma}{1} \dots \aaa{\gamma}{m_1}) \\
\aaA{'\gamma} &=& (\aaa{1}{1} \dots \aaa{1}{m_1}, \aaa{\gamma}{m_1+1} \dots \aaa{\gamma}{m_\gamma-1}, 1)
.\end{eqnarray*}
Clearly the exchange will not change the total cost, thus reduce the problem to Case 1. 
\qed \end{proof}

\subsection{Proof Sketch of Lemma~\ref{lemma:high_greedy}} \label{proof_high_greedy}
\begin{proofsketch}
The complete proof is rather long, so in this paper we will omit some repeated details when it is safe to do that, especially for rigorous proofs of the superiority of a rearranged sequence like in Lemma \ref{lemma:low_greedy}.

It is easy to check that Lemma \ref{lemma:high_greedy} holds if $m_r = 1$ for each agent $r \in [k]$ (any algorithm gives the same total cost in these cases). For general $(m_1 \dots m_k) \in \mathbf{N}^k$, assume for induction that Lemma \ref{lemma:high_greedy} holds for all the ``smaller" instances, that is, that Algorithm \ref{algo:greedy} minimizes the total cost of any instance $(m_1-\delta_1, m_2-\delta_2, \dots, m_k-\delta_k, T_{1+\sum \delta_r}, \dots, T_n)$. We will prove that Algorithm \ref{algo:greedy} will also minimize the cost of the instance  $(m_1, \dots, m_k, T_1,\dots, T_n)$. Specifically, for any decision sequence $\mathbf{x} = (x_1 \dots x_n)$, let $x_1 = \gamma$, we only need to prove in two cases. For convenience we will denote $H_i$ as the $i$-th H task in the sequence starting from task $1$. Similarly we denote $L_i$ as the $i$-th L task in the sequence.

\textbf{When $\gamma<\gamma^*$: } Without loss of generality we can assume that $x_2 \dots x_n$ follows Algorithm \ref{algo:greedy}, for otherwise we can simply turn to consider such a $\mathbf{x}'$, which guarantees to have lower costs than $\mathbf{x}$ due to the assumption of the induction. For any such instance $(m_1 \dots m_k, H, T_2 \dots T_n)$ and any such sequence $(\gamma, x_2 \dots x_n)$, we can have the following observations, which collectively characterize a ``overflowing" situation. 

First, we know $m_{\gamma} < m_{\gamma^*}$ because Algorithm \ref{algo:greedy} always breaks ties by returning the agent with smaller id. Second, since Algorithm \ref{algo:greedy} returns $\gamma^*$ for $T_1 = H$, by definition we know that there must exist $h$ such that there are at least $Z_L$ ``L"s in $T_1 \dots T_{Z_L+Z_H}$, where $Z_L = \sum_{i=1}^{h-1} \min \{m_i, m_{\gamma^*-1}\}$ and $Z_H = \sum_{i=\gamma^*}^{h} (m_i - m_{\gamma^*-1})$ are the sizes of H-zone and L-zone at this time, respectively. Third, we know $H_1=1$ is the only H task assigned by $\mathbf{x}$ to an agent with smaller id with $\gamma^*$, as shown by the following claim. 
\begin{claim}
For any $i > 1$, we have $x_i \geq \gamma^*$ if $T_i = H$.
\end{claim}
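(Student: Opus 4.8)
The plan is to reduce the claim to a single allocation decision and then certify it with a counting argument driven by the very simulation that produced $\gamma^*$. Since $(x_2,\dots,x_n)$ is exactly the output of Algorithm \ref{algo:greedy} on the residual instance obtained from $I$ by decrementing $m_\gamma$, Lemma \ref{lemma:decision} applied to that residual instance guarantees that the agent id's assigned to the ``H''-tasks among $x_2,\dots,x_n$ are nondecreasing in task order. Hence it suffices to prove $x_{H_2}\ge\gamma^*$ for the first ``H''-task after task $1$; the monotonicity then propagates the bound to every later ``H''-task. Because \textsf{ThresholdAllocation} scans $\gamma$ from $k$ down to $1$ and returns the first success, it suffices in turn to exhibit one successful simulation setup for agent $\gamma^*$ at the moment $H_2$ is processed, and I would reuse the very setup $(\gamma^*,h)$ that certified $\gamma^*$ at task $1$.

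First I would unpack the task-$1$ success. It says that among the first $N=Z_L+Z_H$ tasks there are at least $Z_L$ ``L''s, equivalently at most $Z_H$ ``H''s; moreover, since $m_i\le m_{\gamma^*-1}$ for every $i<\gamma^*$, the $L$-zone size simplifies to $Z_L=\sum_{i=1}^{\gamma^*-1}m_i+(h-\gamma^*)\,m_{\gamma^*-1}$, which is at least $M:=\sum_{i<\gamma^*}m_i$. Next I would track the low region (agents $1,\dots,\gamma^*-1$) between task $1$ and task $H_2$. The intervening tasks are the forced $T_1=H$ sent to $\gamma<\gamma^*$ together with the $H_2-2$ consecutive ``L''s $T_2,\dots,T_{H_2-1}$; by the rule that Algorithm \ref{algo:greedy} always routes an ``L'' to the smallest non-empty agent, each of these lands in the low region and therefore consumes a slot that belonged to the $L$-zone of setup $(\gamma^*,h)$ --- unless the low region is already saturated, in which case no later ``H'' can descend into it and the claim is immediate. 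Consequently the $L$-zone threshold shrinks by at least $H_2-1$, i.e. $Z_L^{(H_2)}\le Z_L-(H_2-1)$.

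Then I would invoke Lemma \ref{lemma:horizon} to pin down the look-ahead window at time $H_2$. For the fixed setup $(\gamma^*,h)$ the horizon never recedes, so the window processed at $H_2$ still reaches task $N$; hence the number of ``L''s it contains is at least the number of ``L''s in $[H_2,N]$, which equals the (at least $Z_L$) ``L''s of $[1,N]$ minus the $H_2-2$ ``L''s already spent in $[2,H_2-1]$, and so is at least $Z_L-(H_2-2)$. Comparing with the shrunk threshold, $Z_L-(H_2-2)>Z_L-(H_2-1)\ge Z_L^{(H_2)}$, so agent $\gamma^*$'s simulation succeeds at $H_2$. Therefore \textsf{ThresholdAllocation} returns an agent $\ge\gamma^*$ for $H_2$, and with the reduction above the claim follows for all $i>1$ with $T_i=H$.

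The hard part will be the bookkeeping of the zones as the low-region capacities change. Decrementing agent $\gamma^*-1$ in particular shifts the threshold position $p^*$ and thereby reshapes \emph{both} zones at once, so that $Z_L$ and $Z_H$ move in a coupled way rather than independently; I expect the one-``L'' surplus in the count above to be exactly the slack needed to absorb these boundary effects, with Lemma \ref{lemma:horizon} supplying the key lockstep between consumed $L$-zone slots and the advancing horizon. A second, more mundane subtlety is that the forced non-greedy placement of $T_1$ can momentarily break the sorted order $m_1\le\cdots\le m_k$ underlying the zone definitions and Lemmas \ref{lemma:capacity} and \ref{lemma:horizon}; this is handled by re-sorting the residual instance --- which leaves the multiset of capacities, and hence $\gamma^*$'s certificate, intact --- before the monotonicity lemmas are applied.
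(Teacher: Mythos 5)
Your proposal is correct and follows essentially the same route as the paper: reduce to the first post-deviation H task via Lemma \ref{lemma:decision}, then argue that the simulation certifying $\gamma^*$ at task $1$ still succeeds at $H_2$ because the deviant placement of $T_1$ (and the intervening L's) can only lower the stopping threshold relative to the remaining L-count. The paper compresses this second step into the single phrase ``allocating $H_1$ to agent $\gamma<\gamma^*$ only decreases the stopping threshold,'' so your explicit window/threshold bookkeeping is a more detailed rendering of the same argument, not a different one.
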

\begin{proof}
We know $x_{H_2} \geq \gamma^*$ because Algorithm \ref{algo:greedy} has chosen $\gamma^*$ for $H_1$ and allocating $H_1$ to agent $\gamma < \gamma^*$ only decreases the stopping threshold of Algorithm \ref{algo:greedy}. Furthermore, because we have assumed by induction that $x_2 \dots x_n$ is optimal for $T_2 \dots T_n$, by Lemma \ref{lemma:decision} we know that all the H tasks after $H_2$ will also be allocated to agents with id no smaller than $\gamma^*$. 
\end{proof}

In addition, the following claim shows that the sequence $\mathbf{x}$ (where $x_1= \gamma$ and $x_2 \dots x_n$ follows Algorithm \ref{algo:greedy}) will allocate at least one ``L" in the H-zone.  
\begin{claim} 
If allocating according to the sequence $x_1 \dots x_n$, there always exists an $i$ such that $1 \leq i \leq Z_L+Z_H$, $T_i = L$, $p_{i} > m_{\gamma^*-1}$, and $\gamma^* \leq x_i \leq h$.
\end{claim}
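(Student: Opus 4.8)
The plan is to argue by contradiction. I would assume that \emph{no} $L$ among the first $N := Z_L+Z_H$ tasks is placed at a high position (capacity exceeding $c$) of an agent in $\{\gamma^*,\dots,h\}$, where I abbreviate $c := m_{\gamma^*-1}$, and then contradict the defining property of the pair $(\gamma^*,h)$, namely that these first $N$ tasks contain at least $Z_L$ many $L$'s. Two preliminary computations fix the picture I will use throughout: summing the zone sizes gives $N=\sum_{i=1}^{h}m_i-c$, and since $\gamma<\gamma^*$ forces $m_\gamma\le m_{\gamma^*-1}=c$, the first task (an $H$) occupies the top slot of the empty agent $\gamma$ at capacity $m_\gamma\le c$, i.e.\ a \emph{low} position strictly below agent $\gamma^*$.

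First I would confine the $L$'s to agents $\{1,\dots,h\}$. Because $\mathbf{x}$ runs Algorithm~\ref{algo:greedy} from task $2$ onward, every $L$ is sent to the smallest non-empty agent; hence for an $L$ to land in an agent $r>h$ the agents $1,\dots,h$ would have to be already full, requiring at least $\sum_{i=1}^{h}m_i=N+c$ prior placements, which is impossible within the first $N$ tasks. Therefore, under the contradiction hypothesis, every $L$ among the first $N$ tasks sits at a low position (capacity $\le c$) of some agent in $\{1,\dots,h\}$, using also that agents below $\gamma^*$ possess \emph{only} low positions since their capacities are $\le c$.

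The naive count now stalls, and overcoming the slack is the crux: the low positions of agents $1,\dots,h$ number $Z_L+c$, so even after discarding the one occupied by task $1$ there remain $Z_L+c-1\ge Z_L$ slots, and the $\ge Z_L$ many $L$'s could a priori still fit. I would close this gap of size $c$ in two moves. Move one is a pigeonhole on the $H$'s: the $Z_H$ high slots of the $H$-zone $\{\gamma^*,\dots,h\}$ can, by hypothesis, be filled only by $H$'s, but at most $n_H-1\le Z_H-1$ are available, where $n_H$ is the number of $H$'s among the first $N$ tasks; indeed $n_H=N-n_L\le N-Z_L=Z_H$ because those tasks contain $n_L\ge Z_L$ many $L$'s, and task $1$ is itself an $H$ lying \emph{outside} the $H$-zone. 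Hence at least one high slot of some agent $r^*\in\{\gamma^*,\dots,h\}$ is empty after $N$ steps, forcing $\Q{r^*}{N+1}>c$. Move two invokes Lemma~\ref{lemma:capacity}: since capacities stay sorted, $\Q{r}{N+1}\ge\Q{r^*}{N+1}>c$ for every $r\ge r^*$, so all such agents still have their low positions entirely empty. The $L$'s are thus trapped in the low positions of agents $1,\dots,r^*-1$ only, of which there are $\sum_{i=1}^{\gamma^*-1}m_i+(r^*-\gamma^*)c\le Z_L$; removing task $1$'s slot (legitimate as $\gamma<\gamma^*\le r^*$) leaves at most $Z_L-1$ positions for at least $Z_L$ many $L$'s, the desired contradiction.

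I expect the sorted-capacity propagation of move two to be the delicate point, since it is what upgrades a single guaranteed empty high slot into the far stronger statement that an entire suffix of agents still has untouched low positions. This is precisely where the monotonicity granted by Lemma~\ref{lemma:capacity} is indispensable, and it is exactly what neutralizes the otherwise fatal slack of $c$ in the counting; the $H$-pigeonhole of move one, by contrast, I expect to be routine once the $H$-zone size and task $1$'s low placement are recorded.
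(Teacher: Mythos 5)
Your argument is correct, and it supplies something the paper does not: this claim is stated inside the proof sketch of Lemma~\ref{lemma:high_greedy} with no justification at all, so there is no ``paper proof'' to compare against. Your counting scheme is sound: the identities $N=\sum_{i=1}^{h}m_i-c$ and $\#\{\text{low slots of agents }1..h\}=Z_L+c$ check out, the confinement of $L$'s to agents $1,\dots,h$ follows from the smallest-non-empty-agent rule for $L$'s together with capacity monotonicity, and the two-move closing of the slack $c$ (pigeonhole on at most $n_H-1\le Z_H-1$ usable $H$'s, then propagation of the empty high slot to all agents $r\ge r^*$) correctly shrinks the available low slots from $Z_L+c-1$ down to $Z_L-1$, yielding the contradiction. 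The one point you should make explicit rather than just flag as ``delicate'' is the invocation of Lemma~\ref{lemma:capacity}: that lemma is stated for runs of Algorithm~\ref{algo:greedy}, whereas $\mathbf{x}$ deviates at step~1 by sending $T_1$ to agent $\gamma$. This is harmless for your purposes because the deviation only decrements the capacity of an agent $\gamma<\gamma^*$, so the sorted order among agents indexed $\ge\gamma^*$ (the only ones you compare in move two, and the comparison of agents $\le h$ against agents $>h$ in the confinement step) is undisturbed at $t=1$ and is preserved thereafter by the genuine run of Algorithm~\ref{algo:greedy} on $T_2,\dots,T_n$; a sentence to this effect would make the proof airtight. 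A second, very minor remark: if $Z_H=0$ your pigeonhole on high slots is vacuous, but this case cannot arise since Algorithm~\ref{algo:greedy} only returns $\gamma^*$ when $m_{\gamma^*}>m_{\gamma^*-1}$ (line~\ref{line:stable2}), forcing $Z_H\ge 1$ — worth one clause to close off.
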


Combining all the above observations together, we can derive the following claim, which asserts that there must be some L overflowing from the L-zone to the H-zone if we follow the sequence $\mathbf{x}$ when $\gamma < \gamma^*$. Let $L^*$ denote the first such ``$L_i$". Observe that the decision sequence $\mathbf{x}$ has put $H_1$ (which is also $T_1$) in the L-zone (i.e. "below" position $m_r$) while put $L^*$ in the H-zone (i.e. above position $m_r$). Because $L^*$ is the first ``L" above $m_r$, we know all the tasks on top of $L^*$ are ``H".
\begin{claim} \label{lemma:overflow_L}
If allocating according to the sequence $x_1 \dots x_n$, there always exist $1 \leq H^* < L^* \leq n$ such that $p_{H^*} < p_{L^*}$, and that for any $H^* < i < L^*$, we have $T_i = L$ if $x_i = x_{H^*}$, and $T_i = H$ if $x_i = x_{L^*}$.
\end{claim}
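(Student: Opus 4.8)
The plan is to make the claim constructive: I would exhibit the pair $(H^*,L^*)$ explicitly rather than argue for its existence abstractly. Concretely, take $H^* = H_1 = 1$, the first task itself (recall $T_1 = H$ in this case), and let $L^*$ be the first ``L'' that $\mathbf{x}$ places into the H-zone, whose existence is exactly what the preceding claim guarantees. With these two witnesses fixed, each of the three required properties should fall out directly from the observations already assembled (the two earlier claims together with $m_\gamma < m_{\gamma^*}$), so no new rearrangement computation is needed at this stage.

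First I would dispatch the ordering and positional conditions. Since $\mathbf{x}$ sends $T_1=H$ to agent $\gamma < \gamma^*$ as that agent's very first task, $H_1$ occupies reversed position $p_{H^*} = m_\gamma$, and because $\gamma \le \gamma^*-1$ we have $m_\gamma \le m_{\gamma^*-1}$, so $H_1$ sits in the L-zone. By the definition of $L^*$ it lies in the H-zone, i.e. $p_{L^*} > m_{\gamma^*-1} \ge m_\gamma = p_{H^*}$, giving $p_{H^*} < p_{L^*}$; and $H^* = 1 < L^*$ holds trivially since $T_{L^*} = L \ne H = T_1$.

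Next I would verify the two ``uniform type'' conditions on the window $H^* < i < L^*$. For the agent $x_{L^*}$: every task that $\mathbf{x}$ assigns to $x_{L^*}$ before $L^*$ occupies a strictly higher reversed position than $L^*$ (each assignment lowers the agent's capacity by one), hence a position $> p_{L^*} > m_{\gamma^*-1}$; combined with $\gamma^* \le x_{L^*} \le h$ this places every such task in the H-zone, and since $L^*$ is by construction the earliest ``L'' in the H-zone, all of them must be ``H''. For the agent $x_{H^*} = \gamma$: the first Claim tells us every H-task after $H_1$ is allocated to an agent of index $\ge \gamma^*$, so no ``H'' of index $>1$ can land in $\gamma < \gamma^*$; therefore every task that $\mathbf{x}$ puts into agent $\gamma$ after $H_1$ is an ``L''. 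Restricting both statements to indices $i$ with $1 < i < L^*$ yields precisely the stated conditions.

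The argument is essentially bookkeeping, so the delicate point — the only place I expect to have to be careful — is keeping two orderings straight: the \emph{decreasing} reversed-position (capacity) order within a single agent versus the global task order $1,\dots,n$, using the fact that within one agent a higher capacity means an earlier task index. In particular the step ``all tasks above $L^*$ in its agent are H'' hinges on noting that those higher-capacity tasks genuinely precede $L^*$ in task order and that ``first L in the H-zone'' is a global statement over the first $Z_L+Z_H$ tasks, not merely a per-agent one. Once that correspondence is pinned down, both earlier claims feed in mechanically and the proof closes.
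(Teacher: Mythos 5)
Your proof is correct and takes essentially the same route as the paper: it picks the same witnesses ($H^*=H_1=1$ in the L-zone, $L^*$ the first ``L'' overflowing into the H-zone, whose existence comes from the preceding claim) and derives the positional inequality and the two type-uniformity conditions from the earlier observations exactly as the paper's surrounding argument does. No issues to flag.
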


Based on this claim we can construct a better sequence $\mathbf{x}^*$ by allocating $H_1$ to agent $x_{L^*}$ and allocating $L^*$ to agent $\gamma$. 
\begin{equation*}
\mathbf{x}^* = ( x_{L^*}, x_2 \dots x_{L^*-1}, \gamma, x_{L^*+1} \dots x_n )^\T
\end{equation*}
Compared with the one of $\mathbf{x}$, everything is the same except that a pair of H and L is exchanged in positions, which always lowers down the total cost due to Lemma \ref{lemma:exchange}). The rigorous proof of the benefit of this exchange is similar to the proof of Lemma \ref{lemma:low_greedy}. 

\textbf{When $\gamma>\gamma^*$: } Informally these are the cases when $\mathbf{x}$ put the first task in somewhere ``higher" (i.e., in a smaller position) than where it ``should have been". In the following we will prove that, for any instance $(m_1 \dots m_k, H, T_2 \dots T_n)$ and any sequence $(\gamma, x_2 \dots x_n)$ with $\gamma > \gamma^*$, there always exists a sequence $\mathbf{x}' = (x'_1 \dots x'_n)$ with $x'_1=\gamma' < \gamma$ such that $COST_I(\mathbf{x}') < COST_I(\mathbf{x})$. In other words, any sequence $\mathbf{x}$ with $x_1 > \gamma^*$ cannot be the optimal sequence.

Again, by induction we can assume that $x_2 \dots x_n$ follows Algorithm \ref{algo:greedy}. Similar with the first case, we assume for contradiction that $x_1 \dots x_n$ is optimal, which leads to a series of observations that collectively characterize a snapshot of the allocation. Then we will do some task exchanges in the allocation scheme to reduce the total cost without violating the monotonicity of the allocation scheme, thus forming a contradiction.

First, without loss of generality we know $m_{\gamma-1} < m_{\gamma}$, for otherwise if $m_{\gamma-1}=m_{\gamma}$ we will simply find the smallest agent id $r$ with $m_r = m_{\gamma}$, exchange the tasks allocated in agent $\gamma$ and $r$, and turn to consider the new allocation scheme.

Second, let $H_2$ denote the second H task in $T_1 \dots T_n$ (i.e. $T_i = L$ for any $1<i<H_2$), we know $H_2$ must not be put at the left side of $\gamma$ if $\mathbf{x}$ wants to be optimal. That is, we have $x_{H_2} \geq x_{H_1} = \gamma$. This is because $x_{H_2} < x_{H_1}$ will lead to $p_{H_2} < p_{H_1}$ (note that we just showed $m_{\gamma-1} < m_{\gamma}$), in which case we can turn to consider the sequence $x_{H_2}, x_2 \dots, x_{H_1}, \dots x_n$, which guarantees to have no more cost than $x_1 \dots x_n$. The complete proof is similar to the proof of Lemma \ref{lemma:low_greedy}. Note that by Lemma \ref{lemma:decision} all the subsequent H tasks will also be assign to agent $\gamma$ or at its right side.

Third, we know that $\mathbf{x}$ allocates at least one H task below (including) the position $m_{\gamma-1}$. Formally, we have 
\begin{claim}
There must exist an $i$ such that $T_i = H$ and $p_{i} \leq n_{\gamma-1}$.
\end{claim}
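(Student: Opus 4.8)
The plan is to prove this third observation by a direct pigeonhole argument: the very reason Algorithm~\ref{algo:greedy} rejected agent $\gamma$ in favour of the smaller-index $\gamma^*$ is precisely that too many $H$-tasks are forced into agents $\gamma,\dots,k$, so they cannot all sit high and one must spill to a slot with $p_i \leq m_{\gamma-1}$.

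First I would record why the simulation on candidate $\gamma$ must have failed. The function \textsf{ThresholdAllocation} scans candidates in the order $c = k, k-1, \dots, 1$ and returns (on Line~\ref{line:question}) the first one that passes. Since it returns $\gamma^* < \gamma$ and, by the first observation of this case, $m_{\gamma-1} < m_\gamma$ (so $c=\gamma$ is \emph{not} skipped by the tie rule on Line~\ref{line:stable2}), candidate $c=\gamma$ was tested strictly before $c=\gamma^*$ yet did not return; hence every inner iteration $h=\gamma,\dots,k$ failed the test $|\mathcal{L}| \geq Z_L$. I only need the last iteration $h=k$, where $S_L=\{1,\dots,k-1\}$, $S_H=\{\gamma,\dots,k\}$, $Z_L=\sum_{i=1}^{k-1}\min\{m_i,m_{\gamma-1}\}$ and $Z_H=\sum_{i=\gamma}^{k}(m_i-m_{\gamma-1})$. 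Failure means $|\mathcal{L}| < Z_L$, i.e. strictly fewer than $Z_L$ of the first $Z_L+Z_H$ tasks are ``large''. Consequently the number of $H$-tasks among those first $Z_L+Z_H$ tasks is $(Z_L+Z_H)-|\mathcal{L}| > Z_H$, so in particular the total number of $H$-tasks in the whole sequence strictly exceeds $Z_H$.

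Next I would locate all these $H$-tasks and count the available ``high'' slots. By the second observation of this case together with Lemma~\ref{lemma:decision}, every $H$-task (starting from $H_1=T_1$) is assigned by $\mathbf{x}$ to an agent whose index lies in $\{\gamma,\dots,k\}$. Now partition the slots of these agents by the threshold $m_{\gamma-1}$: since $m_r \geq m_\gamma > m_{\gamma-1}$ for every $r \geq \gamma$, each such agent contributes exactly $m_r-m_{\gamma-1}$ slots with position index exceeding $m_{\gamma-1}$, so the total number of slots with $p > m_{\gamma-1}$ among agents $\gamma,\dots,k$ is precisely $\sum_{r=\gamma}^{k}(m_r-m_{\gamma-1}) = Z_H$. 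Since more than $Z_H$ $H$-tasks must be packed into exactly these agents, they cannot all occupy the $Z_H$ slots with $p>m_{\gamma-1}$; by pigeonhole at least one $H$-task sits in a slot with $p \leq m_{\gamma-1}$, and this task is the required index $i$.

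The main thing to get right is the bookkeeping of the position index rather than any deep idea: one must check that the quantity counted by $Z_H$ (the total ``H-zone capacity'' of the simulation at $h=k$) really coincides with the set of slots of agents $\gamma,\dots,k$ lying strictly beyond position $m_{\gamma-1}$. Reassuringly, this identification is insensitive to whether position is measured forward from $1$ or by the capacity-at-assignment used in the VSPA cost~(\ref{eq:total_cost_2}): in either convention the complement of $\{p \leq m_{\gamma-1}\}$ within agent $r$ has size $m_r-m_{\gamma-1}$, so the complementary region always has total size $Z_H$, and it is the strict inequality ``more than $Z_H$ $H$-tasks'' that forces the overflow. A minor technical point worth one line is that the look-ahead window $\{1,\dots,Z_L+Z_H\}$ fits inside the task sequence, which holds because $Z_L+Z_H$ merely counts a subset of the $n$ available positions.
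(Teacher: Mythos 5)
Your proof is correct and takes essentially the route the paper intends: the paper states this claim without proof inside the sketch of Lemma~\ref{lemma:high_greedy}, but its surrounding narrative (candidate $\gamma$ failing the stopping test since the algorithm returned $\gamma^*<\gamma$, all H tasks confined to agents $\gamma,\dots,k$ by the second observation plus Lemma~\ref{lemma:decision}, and the subsequent description of $H^*$ as ``overflowing'' from the H-zone) is exactly your pigeonhole argument. Your bookkeeping is sound --- failure at $h=k$ yields strictly more than $Z_H$ H-tasks in the look-ahead window, agents $\gamma,\dots,k$ contain exactly $Z_H=\sum_{r=\gamma}^{k}(m_r-m_{\gamma-1})$ slots beyond the threshold $m_{\gamma-1}$ under either position convention, and $Z_L+Z_H=n-m_{\gamma-1}\leq n$ so the window is in range.
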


Let $H^*$ be the \emph{first} such H task (which basically ``overflow"s from the H-zone). We know that there is no interleaved H and L in the H-zone at least until $H^*$. Formally, we have
\begin{claim}
For any $1 \leq i < j \leq H^*$, if $x_i = x_j$ and $T_j = H$, then we must have $T_i = H$.  
\end{claim}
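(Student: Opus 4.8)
The plan is to recast the claim as a structural invariant of the greedy run and prove it by following the placements in time order up to $H^*$. Stated contrapositively, it suffices to show that, for every task placed at a time $t\le H^*$, the greedy rule never deposits an $H$ task into an agent that already holds an $L$ task; since each new placement lands at the current (lowest-so-far) capacity of its agent, an ``$L$ above $H$'' inversion can be created only at the moment an $H$ is placed beneath a pre-existing $L$, and ruling this out gives exactly the claim. I would first dispose of the trivial agents: by the observations already established in this case ($x_{H_1}=\gamma$, $x_{H_2}\ge\gamma$, and Lemma~\ref{lemma:decision} propagating this to every later $H$), all $H$ tasks go to agents of index $\ge\gamma$, so any agent $r<\gamma$ receives only $L$ tasks and the claim holds vacuously there. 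So I may take $x_i=x_j=r\ge\gamma$ and assume, toward a contradiction, $T_i=L$ and $T_j=H$ with $i<j\le H^*$.

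Next I would extract the combinatorial snapshot forced by this hypothesis. Since the larger task $L$ is always sent to the smallest non-empty agent, agent $r$ receiving an $L$ at time $i$ (at capacity $q_i$) forces agents $1,\dots,r-1$ to be full at time $i$, and by Lemma~\ref{lemma:capacity} they stay full through time $j$. Because $j\le H^*$ and $H^*$ is by definition the \emph{first} $H$ placed at capacity $\le m_{\gamma-1}$, no $H$ has overflowed yet, so every $H$ already placed in agents $\gamma,\dots,r-1$ sits at capacity $>m_{\gamma-1}$; consequently the bottom $m_{\gamma-1}$ positions of each such agent are occupied by $L$ tasks. Counting the $L$ tasks now filling agents $1,\dots,r-1$ gives exactly $\sum_{s=1}^{\gamma-1}m_s+(r-\gamma)\,m_{\gamma-1}$, which is precisely the L-zone size $Z_L$ of the look-ahead setup $(\gamma,r)$; together with the $L$ deposited in agent $r$, at least $Z_L+1$ of the tasks $T_1,\dots,T_i$ are of type $L$.

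The quantitative heart is to contradict this count using $\gamma>\gamma^{*}$. Because greedy chose $\gamma^{*}<\gamma$ for $T_1$, the stopping condition \emph{failed} for candidate $\gamma$ at every $h$, and in particular for $h=r$: among the first $Z_L+Z_H$ tasks there are strictly fewer than $Z_L$ tasks of type $L$, where $Z_L+Z_H=\sum_{s\le r}m_s-m_{\gamma-1}$ counts all positions of agents $1,\dots,r-1$ together with the top $m_r-m_{\gamma-1}$ positions of agent $r$. I would then show that the $Z_L$ many $L$ tasks that fill agents $1,\dots,r-1$ all carry indices at most $Z_L+Z_H$, so that the prefix $T_1,\dots,T_{Z_L+Z_H}$ already contains $\ge Z_L$ tasks of type $L$, flatly contradicting the failure of the condition. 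The boundary sub-case $j=H^*$ is identical: the offending $H$ being the first overflow does not change the fact that the earlier $L$ at time $i$ forces agents $1,\dots,r-1$ full and hence produces the same $L$-count.

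I expect the last step --- certifying that the $Z_L$ filling $L$ tasks occur within the first $Z_L+Z_H$ task indices --- to be the main obstacle, because the failure of the condition bounds $L$'s in a window \emph{indexed from time $1$}, whereas ``agents $1,\dots,r-1$ are full'' is a statement about \emph{positions} filled by time $i$, and tasks are not placed in window order. Making the two commensurate requires charging each $H$ placed before $H^*$ to an H-zone slot of setup $(\gamma,r)$ and each relevant $L$ to an L-zone slot, then invoking Lemma~\ref{lemma:horizon} (the horizon of a fixed setup only advances) to guarantee the window $T_1,\dots,T_{Z_L+Z_H}$ has not been overtaken; Lemmas~\ref{lemma:capacity} and~\ref{lemma:decision} keep this charging consistent. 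A case split on whether the $L$ at time $i$ lies in agent $r$'s H-zone ($q_i>m_{\gamma-1}$) or L-zone ($q_i\le m_{\gamma-1}$) may be needed to pin down the exact window, but in both cases the contradiction reduces to the single inequality ``$<Z_L$ versus $\ge Z_L$''.
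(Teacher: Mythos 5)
Your overall strategy is sound and most of your intermediate observations are correct: from task $2$ on the algorithm sends every $L$ to the smallest non-empty agent, so an $L$ entering agent $r\ge\gamma$ at time $i$ forces agents $1,\dots,r-1$ to be full; by minimality of $H^*$, every $H$ placed strictly before time $j\le H^*$ sits at capacity $>m_{\gamma-1}$; and since the greedy returned $\gamma^*<\gamma$ while $m_{\gamma-1}<m_\gamma$ was arranged earlier, the stopping test for candidate $\gamma$ failed at $h=r$, giving strictly fewer than $Z_L$ tasks of type $L$ among $T_1,\dots,T_{Z_L+Z_H}$; your computation of $Z_L$ and $Z_L+Z_H$ for setup $(\gamma,r)$ is also right. (For reference, the paper states this claim without proof inside the proof sketch of Lemma~\ref{lemma:high_greedy}, so there is no official argument to match; judged on its own merits, your plan is correct up to the step you yourself flag.) But that flagged step is a genuine gap, and the patch you propose does not close it. Lemma~\ref{lemma:horizon} describes how the zone sizes of a fixed setup evolve during the algorithm's \emph{later} simulations, whereas the failed threshold you need is the one evaluated at time $1$, whose window $\{1,\dots,Z_L+Z_H\}$ is fixed once and for all; horizon-monotonicity has nothing to contribute. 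Moreover, your charging claim is too strong as stated: an $H$ placed before $H^*$ but \emph{after} $j$ may land in an agent $>r$, outside the H-zone of setup $(\gamma,r)$. The charging is only valid for $H$'s placed before $j$, where Lemma~\ref{lemma:decision} (applied to $x_j=r$, together with $x_{H_2}\ge\gamma$ and $x_1=\gamma$) confines them to agents in $[\gamma,r]$.

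The gap closes with a two-step counting argument assembled from pieces you already have. Step 1: you showed that at least $Z_L+1$ of the tasks $T_1,\dots,T_i$ are of type $L$ (the $Z_L$ many filling the capacity-$\le m_{\gamma-1}$ slots of agents $1,\dots,r-1$, plus $T_i$ itself), while the failed test says at most $Z_L-1$ of $T_1,\dots,T_{Z_L+Z_H}$ are; since prefix $L$-counts are nondecreasing in the endpoint, this forces $i>Z_L+Z_H$, hence $j>Z_L+Z_H$ — so the entire window precedes $j$, and no case split on whether $q_i>m_{\gamma-1}$ is needed. Step 2: every $H$ inside the window, being before $j$, occupies a distinct slot of the $(\gamma,r)$ H-zone (agent index in $[\gamma,r]$ by Lemma~\ref{lemma:decision}, capacity $>m_{\gamma-1}$ by minimality of $H^*$), so the window contains at most $Z_H$ tasks of type $H$ and therefore at least $Z_L$ of type $L$, contradicting the failed test. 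Note that the $Z_L$ filling $L$'s never need to be located inside the window at all: they are used only once, in Step 1, to push $j$ past the window — which is precisely what dissolves your worry that tasks are not placed in window order.
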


Finally, let $L^*$ be the first L task \emph{after} $H^*$ (such a $L^*$ must exist, for otherwise $H^*$ would be put somewhere lower). In the snapshot of the situation right after the $L^*$ is allocated, we can construct a better decision sequence $\mathbf{x}'$ by allocating the first task in agent $\gamma-1$ (rather than in $\gamma$ as the original sequence $\mathbf{x}$ does) and allocating $L^*$ at the position of $L^*-1$. It is easy to see that the re-arrangement virtually exchange a pair of H and L tasks, which will not increase the total cost due to Lemma \ref{lemma:exchange}. On the other hand, to show that such a re-arrangement will not violate the monotonicity of the allocation scheme, the key insight is to see that, by definition of Algorithm \ref{algo:greedy} there are no enough L tasks to ``catch up with" the allocation pace of H as long as H is not overflowing from the H-zone.
\end{proofsketch}

\section{A Faster Version of Algorithm \ref{algo:greedy}} \label{sec_faster_algo}

\begin{algorithm}[h!] \label{algo:threshold}
\caption{A faster version of the simulation-based algorithm}

\SetKw{Continue}{continue}
\SetKw{Goto}{goto}
\SetKwFunction{SA}{StreamingAllocation} 
\newcommand{\hrn}{\mathcal{H}_{r,h}}
\newcommand{\trd}{\mathcal{T}_{r,h}}
\newcommand{\cnt}{\mathcal{C}_{r,h}}

\KwIn{$ m_1 \dots m_k, T_1 \dots T_n$, where $m_1\leq m_2 \leq \dots \leq m_k$}
\KwOut{$x_1 \dots x_n$}
\BlankLine

let $\hrn$ be a $K \times K$ matrix denoting the look-ahead horizons\\
let $\trd$ be a $K \times K$ matrix denoting the stopping thresholds\\
let $\cnt$ be a $K \times K$ matrix denoting the counters of L tasks\\
\BlankLine

\For{$r = k$ \KwTo $2$}
{
	\For{$h = \gamma$ \KwTo $k$}
	{  
		$\trd \gets \sum_{i = 1}^{h-1} \min \{m_i, m_{r-1}\}$ \\ 
		$\hrn \gets \trd + \sum_{i = r}^h (m_i - m_{r-1})$  \\
		$\cnt \gets |\{i: i \in \{1, \dots, \hrn\} \text{~~and~~} T_i = L\}|$ \\  
	}
}
\BlankLine

set $\Q{r}{1} = m_r$ for each $1 \leq r \leq k$

\For{$t = 1$ \KwTo $n$}
{
	$x_t \gets$ \SA ($\Q{1}{t} \dots \Q{k}{t}, T_1 \dots T_n, t$)  

	$\Q{r}{t+1} \gets \Q{r}{t} -\mathds{1}(x_t = r)$ ~~for each $1 \leq r \leq k$
}

\Return $\mathbf{x}$  \\
\end{algorithm}

\begin{algorithm}[h!]
\caption{The routine in Algorithm \ref{algo:threshold}, which updates the data variables incrementally.}

\SetKw{Continue}{continue}
\SetKw{Goto}{goto}
\SetKwFunction{SA}{StreamingAllocation} 
\newcommand{\hrn}{\mathcal{H}_{r,h}}
\newcommand{\trd}{\mathcal{T}_{r,h}}
\newcommand{\cnt}{\mathcal{C}_{r,h}}

\KwSty{Function} \SA \\
\KwIn{$ m_1 \dots m_k, T_1 \dots T_n, t$, where $m_1\leq m_2 \leq \dots \leq m_k$}
\KwOut{the agent to which $T_t$ is assigned to}
\BlankLine

	\uIf{$T_t = L$}
	{
		$r^* \gets$ the smallest $r$ with $m_{r} > 0$ 
	} 
	\Else
	{
		\For{$r = k$ \KwTo $2$}
		{
			\For{$h = \gamma$ \KwTo $k$}
			{
				\uIf{$\cnt \geq \trd$} 
				{
					$r^* \gets r$ \\
					\Goto line \ref{line:settled} \\
				}
			}
		} 
		$r^* \gets 1$ \\
	}
	\BlankLine

	\label{line:settled}
	\For{$r = k$ \KwTo $2$} 
	{
		\For{$h = \gamma$ \KwTo $k$}
		{
			\uIf{$r^* > h$} 
			{
				$\hrn \gets \hrn +1$ \\
				$\trd \gets \trd$ \\
				$\cnt \gets \cnt - \mathds{1}(T_t = L) + \mathds{1}(T_{\hrn} = L)$
			}
			\uElseIf{$r \leq r^* \leq h$}
			{
				$\hrn \gets \hrn$ \\
				$\trd \gets \trd$ \\
				$\cnt \gets \cnt - \mathbb{I}(T_t = L)$
			}
			\uElseIf{$r^* = r-1$}
			{
				$\hrn \gets \hrn +1$ \\
				$\trd \gets \trd -(1+h-r)$ \\
				$\cnt \gets \cnt - \mathds{1}(T_t = L) + \mathds{1}(T_{\hrn} = L)$
			}
			\ElseIf{$r^* < r-1$}
			{
				$\hrn \gets \hrn$ \\
				$\trd \gets \trd -1$ \\
				$\cnt \gets \cnt - \mathds{1}(T_t = L)$
			}
		}
	} 
	\BlankLine

	\Return $r^*$ \\
\end{algorithm}

\end{appendix}

\end{document}